\documentclass[aps,prx,longbibliography,twocolumn,superscriptaddress,floatfix]{revtex4-2}
\usepackage{amstext}
\usepackage{amssymb}
\usepackage{amsmath}
\usepackage{amsthm}
\usepackage{algorithm}
\usepackage{algpseudocode}
\usepackage{hhline}
\usepackage{dsfont}
\usepackage{float}
\usepackage{tikz}
\usepackage{makecell}
\usepackage{siunitx}
\usepackage{bbold}
\usepackage[braket,qm]{qcircuit}
\usepackage{braket}
\usepackage{mathtools}
\usepackage{xcolor}
\usepackage{hyperref}
\usepackage{enumitem}
\hypersetup{colorlinks=true,allcolors=[rgb]{0.1,0.1,0.5}}

\usepackage[utf8]{inputenc}

\usepackage{cleveref}
\Crefname{equation}{Eq.}{Eqs.}
\Crefname{figure}{Fig.}{Figs.}
\Crefname{tabular}{Tab.}{Tabs.}

\newtheorem{theorem}{Theorem}
\newtheorem{lemma}[theorem]{Lemma}
\theoremstyle{remark}
\newtheorem{remark}{Remark}

\definecolor{mygreen}{rgb}{0,0.5,0}

\newcommand{\ugood}{\overline{\text{good}}(c)}
\newcommand{\ubad}{\overline{\text{bad}}(c)}

\begin{document}

\title{Resilience of quantum random access memory to generic noise}

\author{Connor T. Hann}
\affiliation{Departments of Applied Physics and Physics, Yale University, New Haven, Connecticut 06520, USA}
\affiliation{Yale Quantum Institute, New Haven, CT 06520, USA}

\author{Gideon Lee}
\affiliation{Departments of Applied Physics and Physics, Yale University, New Haven, Connecticut 06520, USA}
\affiliation{Yale Quantum Institute, New Haven, CT 06520, USA}
\affiliation{Yale-NUS College, 16 College Avenue West, 138527, Singapore}

\author{S. M. Girvin}
\affiliation{Departments of Applied Physics and Physics, Yale University, New Haven, Connecticut 06520, USA}
\affiliation{Yale Quantum Institute, New Haven, CT 06520, USA}

\author{Liang Jiang}
\affiliation{Departments of Applied Physics and Physics, Yale University, New Haven, Connecticut 06520, USA}
\affiliation{Yale Quantum Institute, New Haven, CT 06520, USA}
\affiliation{Pritzker School of Molecular Engineering, The University of Chicago, Chicago, Illinois 60637, USA}

\begin{abstract}
Quantum random access memory (QRAM)---memory which stores classical data but allows queries to be performed in superposition---is required for the implementation of numerous quantum algorithms. While naive implementations of QRAM are highly susceptible to decoherence and hence not scalable, it has been argued that the bucket brigade QRAM architecture [Giovannetti \textit{et al}., \href{https://journals.aps.org/prl/abstract/10.1103/PhysRevLett.100.160501}{Phys.~Rev.~Lett.~100 160501 (2008)}] is highly resilient to noise, with the infidelity of a query scaling only logarithmically with the memory size. In prior analyses, however, this favorable scaling followed directly from the use of contrived noise models, thus leaving open the question of whether experimental implementations would actually enjoy the purported scaling advantage. In this work, we study the effects of decoherence on QRAM in full generality. Our main result is a proof that this favorable infidelity scaling holds for \emph{arbitrary} error channels (including, e.g., depolarizing noise and coherent errors). Our proof identifies the origin of this noise resilience as the limited entanglement among the memory's components, and it also reveals that significant architectural simplifications can be made while preserving the noise resilience. We verify these results numerically using a novel classical algorithm for the efficient simulation of noisy QRAM circuits. Our findings indicate that QRAM can be implemented with existing hardware in realistically noisy devices, and that high-fidelity queries are possible without quantum error correction. Furthermore, we also prove that the benefits of the bucket-brigade architecture persist when quantum error correction is used, in which case the scheme offers improved hardware efficiency and resilience to logical errors.
\end{abstract}
\maketitle

%%%%%%%%%%%%%%%%%%%%%%%%%%%%%%%%%%%%%%%%%%%%%%%%%%
%%%%%%%%%%%%%%%%%%%%%%%%%%%%%%%%%%%%%%%%%%%%%%%%%%
%%%%%%%%%%%%%%%%%%%%%%%%%%%%%%%%%%%%%%%%%%%%%%%%%%

\section{Introduction}

Numerous quantum algorithms have been proposed that claim speedups over their classical counterparts. Such algorithms typically require that classical data---constituting a classical description of the problem instance---be made available to a quantum processor. Frequently, theoretical constructions called oracles (or black boxes) are invoked to provide this access~\cite{ambainis2004}.  For example, oracles can provide quantum access to classical descriptions of Hamiltonians in quantum simulation algorithms~\cite{berry2012,berry2015,berry2015b,babbush2018,low2019,bauer2020b}, and they are used to encode classical datasets into quantum states in quantum machine learning algorithms~\cite{lloyd2013,wittek2014,adcock2015,biamonte2017,ciliberto2018}. In practice, however, providing quantum access to classical data can be nontrivial, and in order to claim a genuine quantum speedup it is crucial that the details of how such oracles are implemented be specified~\cite{aaronson2015}. 

Quantum random access memory (QRAM)~\cite{giovannetti2008,giovannetti2008a,hong2012,arunachalam2015,dimatteo2020,paler2020a} is a general-purpose architecture for the implementation of quantum oracles. 
QRAM can be understood as a generalization of classical RAM; the classical addressing scheme in the latter is replaced by a quantum addressing scheme in the former.
More precisely, in the case of classical RAM, an address $i$ is provided as input, and the RAM returns the memory element $x_i$ stored at that address. Analogously, in the case of QRAM, a quantum superposition of different addresses $\ket{\psi_\mathrm{in}}$ is provided as input, and the QRAM returns an entangled state $\ket{\psi_\mathrm{out}}$ where each address is correlated with the corresponding memory\- element\-,
\begin{align}
\ket{\psi_\text{in}}&=\sum_{i=0}^{N-1}\alpha_i \ket{i}^A\ket{0}^B \nonumber \\
\xrightarrow[]{\text{QRAM}} \ket{\psi_\text{out}}&=\sum_{i=0}^{N-1}\alpha_i \ket{i}^A\ket{ x_i}^B,
\label{eq:QRAM_def}
\end{align}
where $N$ is the size of the memory~\footnote{Alternatively, one could denote the size of the memory by $2^n$, where $n\equiv \log_2 N$ is the number of qubits in the address register $A$. Our main result, the polylogarithmic (in $N$) scaling of the QRAM query infidelity, could then be equivalently stated as a polynomial scaling (in $n$).}, and the superscripts $A$ and $B$ respectively denote the input and output qubit registers. (In this work, we restrict our attention to the case where the memory elements are classical, though in principle QRAM can also be used to query quantum data.) 
Remarkably, QRAM can perform operation~(\ref{eq:QRAM_def}) in only $O(\log N)$ time, albeit at the cost of $O(N)$ ancillary qubits. The short query time, together with the generality of operation~(\ref{eq:QRAM_def}), makes QRAM appealing for use in many quantum algorithms, especially those that require $O(\log N)$ query times in order to claim exponential speedups. 
Further, QRAM can serve as an oracle implementation in quantum algorithms for machine learning~\cite{lloyd2013,wittek2014,wiebe2014a,adcock2015,biamonte2017,ciliberto2018,kerenidis2016,kerenidis2020}, chemistry~\cite{cao2019a,bauer2020b}, and a host of other areas~\cite{grover1996,schutzhold2003,childs2003,schaller2006,giovannetti2008b,harrow2009,wiebe2012a,lloyd2016,low2018}. 

The idea of QRAM has faced skepticism, however, and the question of whether QRAM can be used to facilitate quantum speedups, either in principle or in practice, has not been definitively settled (see, e.g., Refs.~\cite{aaronson2015,steiger2016}, or the excellent summary in Ref.~\cite{ciliberto2018}). A central practical concern is the seemingly high susceptibility of QRAM to decoherence~\cite{giovannetti2008,arunachalam2015}. As we discuss below, naive implementations of QRAM perform operation (\ref{eq:QRAM_def}) with an infidelity that scales linearly with the size of the memory. Such implementations are not scalable. As the memory size increases, the infidelity grows rapidly without quantum error correction, yet the overhead associated with error correction can quickly become prohibitive because all $O(N)$ ancillary qubits need to be corrected~\cite{dimatteo2020}. 

Refs.~\cite{giovannetti2008,giovannetti2008a} proposed the so-called “bucket-brigade” QRAM architecture as a potential solution to this decoherence problem, though this solution has also faced skepticism. Proponents argue that the bucket-brigade QRAM is highly resilient to noise, in that it can perform operation (\ref{eq:QRAM_def}) with an infidelity that scales only polylogarithmically with the size of the memory. This favorable scaling could allow for high-fidelity queries of large memories without the need for quantum error correction, thereby mitigating the aforementioned scalability problem.
This noise resilience, however, has only been derived for contrived noise models that place severe constraints on the quantum hardware~\cite{giovannetti2008,giovannetti2008a,arunachalam2015}, thus casting doubt on the viability of the bucket-brigade architecture.
Indeed, 
while several proposals for experimental implementations of QRAM have been put forth~\cite{giovannetti2008a,hann2019,kyaw2015,cadellans2015,hong2012}, to our knowledge there has yet to be an experimental demonstration of even a small-scale QRAM~\footnote{Note that the ``random access quantum memories'' demonstrated in Refs.~\cite{naik2017,jiang2019,langenfeld2020} are distinct from QRAM; these experiments do not demonstrate the quantum addressing needed to perform operation~(\ref{eq:QRAM_def}).}.
Absent from this debate has been a fully general and rigorous analysis of how decoherence affects the bucket-brigade architecture. 

In this work, we study the effects of generic noise on the bucket-brigade QRAM architecture. Our main result is that the architecture is far more resilient to noise than was previously thought (our main scaling results are summarized in \Cref{tab:infid_scalings}). We rigorously prove that the infidelity scales only \emph{polylogarithmically} with the memory size even when all components are subject to arbitrary noise channels, and we verify this scaling numerically. 
Remarkably (and perhaps counter-intuitively), this scaling holds even for noise channels where the expected number of errors scales \emph{linearly} with the memory size. 
Our analysis reveals that this remarkable noise resilience is a consequence of the limited entanglement among the memory's components. We leverage this result to show that significant architectural simplifications can be made to the bucket-brigade QRAM, and that so-called ``hybrid'' architectures~\cite{low2018,berry2019,dimatteo2020,paler2020a}, which implement~(\ref{eq:QRAM_def}) with fewer  qubits  but  longer  query  times,  can also be made partially noise resilient. We also show that these benefits persist when quantum error correction is used.
Importantly, the present work shows that a noise-resilient QRAM can be constructed from realistically noisy devices, paving the way for small-scale, near-term experimental demonstrations of QRAM.

\begin{table}[t]
    \centering
    \def\arraystretch{1.5}
    \begin{tabular}{ |p{5.2cm}|c|  }
\hline
\textbf{Architecture} & \textbf{Infidelity scaling}  \\
\hhline{|=|=|}
Fanout QRAM (Sec.~\ref{sec:2}) & $ N \log N$  \\
\hline
Standard BB QRAM (Secs.~\ref{sec:2},\ref{sec:3})  & $ \log^2 N$  \\
\hline
Two-level BB QRAM (Sec.~\ref{sec:5}) & $ \log^3 N$  \\
\hhline{|=|=|}
Hybrid fanout (Sec.~\ref{sec:hybrid}) & $ N \log N + M \log^2 N$  \\
\hline
Hybrid BB (Sec.~\ref{sec:hybrid}) & $ M \log^2 N$  \\
\hline
\end{tabular}
    \caption{Infidelity scalings of QRAM architectures. $N$ denotes the size of the classical memory being queried, and bucket-brigade is abbreviated as BB. The first three architectures have circuit depth $O(\log N)$ and require $O(N)$ qubits. For the hybrid architectures, $M\leq N$ is a tunable parameter that determines the circuit depth, $O(M\log N)$, and the number of qubits, $O(N/M+\log N)$.}
    \label{tab:infid_scalings}
\end{table}

This paper is organized as follows. In \Cref{sec:2}, we give a detailed review of QRAM architectures, and an intuitive explanation for the noise-resilience of the bucket-brigade scheme is provided. { Our main result is presented in~\Cref{sec:3}: we prove that the query infidelity of the bucket-brigade architecture scales only polylogarithmically when its components are subject to generic mixed-unitary error channels (the full proof for arbitrary error channels is given in~\Cref{appendix:proof}). Importantly, these proofs assume that \emph{all} components of the QRAM (both active and inactive) are susceptible to decoherence, in contrast to prior works. The remaining sections provide corollaries, generalizations, and numerical demonstrations of this main result. }
In \Cref{sec:4}, we propose and implement an efficient classical algorithm for the simulation of noisy QRAM circuits, and we use this algorithm to confirm that the bucket-brigade QRAM is resilient to realistic errors. Next, in \Cref{sec:5}, we show that the use of three-level memory elements in the original bucket-brigade architecture is superfluous and that the architecture can be significantly simplified (while maintaining noise resilience) by instead using two-level memory elements. 
In \Cref{sec:hybrid}, we show that the bucket-brigade architecture can also be employed to imbue hybrid architectures with partial noise resilience.
In \Cref{sec:6}, we prove that error-corrected implementations of the bucket-brigade architecture are resilient to logical errors, and we discuss the practical utility of error-corrected QRAM. Finally, in \Cref{sec:7} we conclude by discussing potential applications. 

%%%%%%%%%%%%%%%%%%%%%%%%%%%%%%%%%%%%%%%%%%%%%%%%%%
%%%%%%%%%%%%%%%%%%%%%%%%%%%%%%%%%%%%%%%%%%%%%%%%%%
%%%%%%%%%%%%%%%%%%%%%%%%%%%%%%%%%%%%%%%%%%%%%%%%%%

\section{Quantum random access memory}
\label{sec:2}

In both classical and quantum random accesses memories, each location in memory is indexed by a unique binary address. To read from the memory, an address is provided as input, and the memory element located at that address is returned at the output. In the classical case, transistors are the physical building blocks of the addressing scheme: they act as classical routers, directing electrical signals to the memory location specified by the address bits. Analogously, in the quantum case, quantum routers are the fundamental building blocks of the addressing scheme.  As shown in Fig.~\ref{fig1}(a), a quantum router is a device that directs incident signals along different paths in coherent superposition, conditioned on the state of a routing qubit. For example, if the routing qubit is in state $\ket{0}$ ($\ket{1}$), then a qubit incident on the router is routed to the left (right). If the routing qubit is in a superposition, then the incident qubit is routed in both directions in superposition, becoming entangled with the routing qubit in the process.  Quantum routers can also be understood through the language of quantum circuits [Fig.~\ref{fig1}(b)]; the routing operation is a unitary that can be implemented via a sequence of controlled-SWAP gates (Fredkin gates).  

In this section, we review two QRAM architectures based on quantum routers: the fanout architecture~\cite{nielsen2000} and the bucket-brigade architecture~\cite{giovannetti2008,giovannetti2008a}. 
The fanout architecture is highly susceptible to noise, and we discuss it in order to illustrate how the lack of noise-resilience fundamentally limits QRAM scalability. The noise-resilience of the bucket-brigade architecture is the main focus of this work. 

\subsection{Fanout QRAM architecture}

\begin{figure*}
\centering{}\includegraphics[width=2.0\columnwidth]{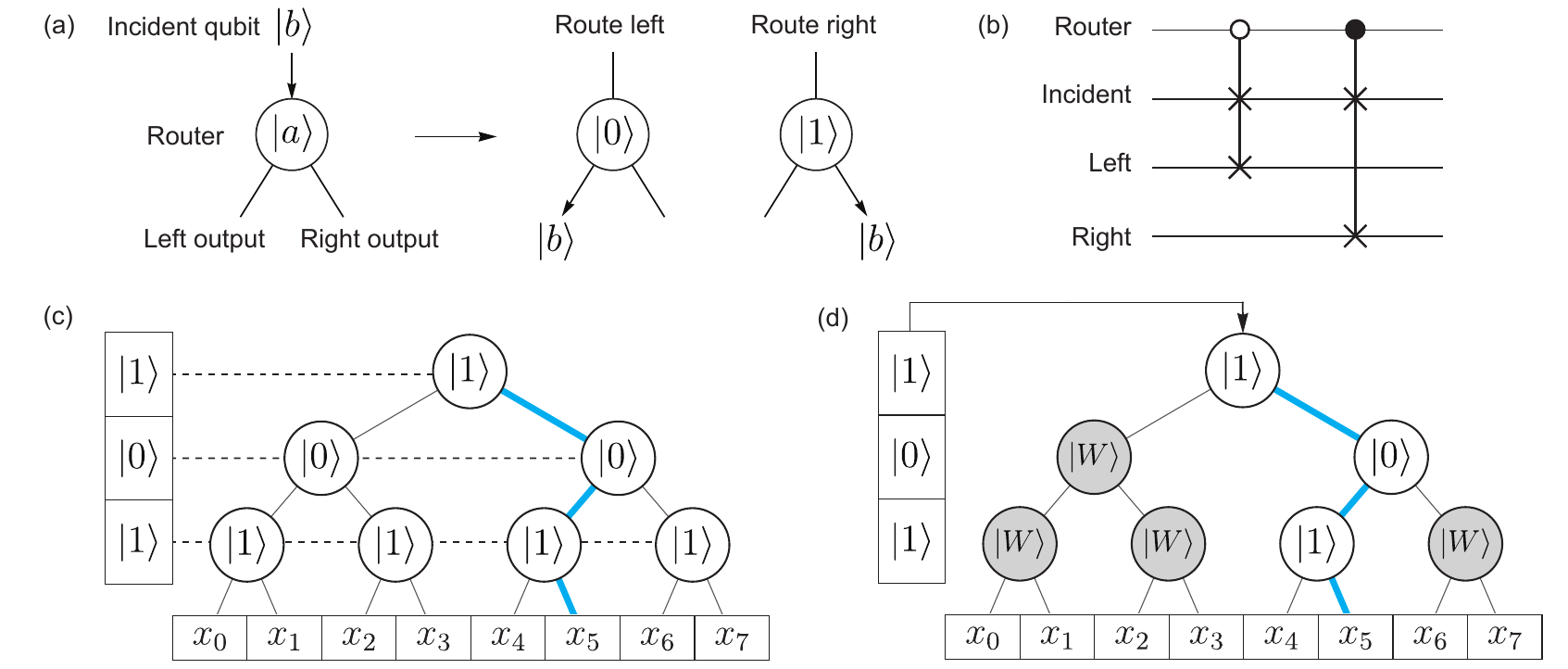}\caption{\label{fig1} QRAM implementations. (a) Quantum router. The router directs an incident qubit $\ket{b}$ at its top port out of either the left or right output ports conditioned on the state $\ket{a}$ of the router. When $\ket{a}=\ket{0}(\ket{1})$, the incident qubit leaves out of the left (right) port. (b) Example of a quantum circuit that implements the routing operation using two controlled-SWAP gates, one conditioned on the control being $\ket{0}$ (open circle) and the other conditioned on the control being $\ket{1}$ (filled circle). (c) Fanout QRAM. Each address qubit controls the states of all routers within the corresponding level of the binary tree. A bus qubit injected at the top node then follows the path (blue) to the specified memory element.  (d) Bucket-brigade QRAM, utilizing routers with three sates: wait $\ket{W}$, route left $\ket{0}$, and route right $\ket{1}$. The address qubits themselves are routed into the tree, carving out a path to the memory.
}
\end{figure*}

A QRAM can be constructed out of quantum routers as shown in Fig 1(c) (see Ref.~\cite{nielsen2000}, Chapter 6). A collection of routers is arranged in a binary tree, with the outputs of routers at one level of the tree acting as inputs to the routers at the next level down. The memory is located at the bottom of the tree, with each of the $N$ memory cells connected to a router at the bottom level. To query the memory, all routing qubits are initialized in $\ket{0}$, and a register of $\log N$ address qubits is prepared in the desired state (in this work, all logarithms are base 2). All routing qubits at level $\ell$ of the tree are then flipped from $\ket{0}$ to $\ket{1}$ conditioned on the $\ell$-th address qubit. To retrieve the memory contents, a so-called bus qubit is prepared in the state $\ket{0}$ and injected into the tree at the top node. The bus follows the path indicated by the routers down to the memory. Upon reaching a memory cell, the contents of that memory cell are copied into the state of the bus (see~\Cref{appendix:copying_data} for details). Note that because we consider \emph{classical} data, the data can be copied without violating the no-cloning theorem. For simplicity, we assume that each memory element $x_i$ is a single bit, 
in which case a single bus qubit suffices to store the memory element (higher-dimensional data can be retrieved using multiple bus qubits).
Finally, the bus is routed back out of the tree via the same path, and all routers are flipped back to $\ket{0}$ in order to disentangle them from the rest of the system.

Importantly, because the routers operate coherently, the above procedure allows one to query multiple memory elements in superposition, as in~(\ref{eq:QRAM_def}). If the address qubits are prepared in a superposition of different computational basis states, the bus is routed to a superposition of different memory locations.

In this architecture, the total time required to perform a query (or, equivalently, the circuit depth) is only $O(\log N)$. The ability to perform queries in logarithmic time can be crucial for algorithms that invoke QRAM in order to claim exponential speedups over their classical counterparts. However, this speed comes at the price of a high hardware cost. To perform operation~(\ref{eq:QRAM_def}), both the fanout and bucket-bridgade architectures require $O(N)$ ancillary qubits to serve as routers.
We discuss practical concerns associated with the high hardware cost more thoroughly in~\Cref{sec:6,sec:7}. 
For context, we note that there is a space-time trade-off in implementing operation~(\ref{eq:QRAM_def}). At the other extreme, there are circuits which implement~(\ref{eq:QRAM_def}) in $O(N)$ time using only $O(\log N)$ qubits~\cite{babbush2018,park2019,deveras2020}, and several implementations that leverage this trade-off have also been proposed~\cite{low2018,berry2019,dimatteo2020,paler2020a}. We discuss the effect of decoherence on these implementations in detail in \Cref{sec:hybrid}.

The fanout architecture is impractical due to its high susceptibility to decoherence.  Each address qubit is maximally entangled with all routers at the respective level of the tree (similar to a GHZ state), so the decoherence of any individual router is liable to ruin the query. As an example, suppose that the routers are subject to amplitude damping errors. The loss of an excitation from any router at level $\ell$ collapses all other level-$\ell$ routers and the $\ell$-th address qubit to the $\ket{1}$ state. Any terms in the superposition where the $\ell$-th address qubit was in the $\ket{0}$ state prior to the error are thus projected out, thereby reducing the fidelity by a factor of 2 on average. 

More generally, suppose that each router suffers an error with probability $\varepsilon$ at each time step during the query. The final state $\Omega$ of the full system (address, bus, and routers) can then be written as a statistical mixture 
\begin{equation}
\Omega = (1-\varepsilon)^{T(N-1)} \;\Omega_{\mathrm{ideal}} + \ldots
\label{eq:fanout_mixture}
\end{equation}
where $\Omega_\mathrm{ideal}$ is the error-free state, $T=O(\log N)$ is the number of time steps required to perform a query, and ``$\ldots$'' denotes all states in the mixture where at least one of the $N-1$ routers has suffered an error. We define the \emph{query fidelity} as
\begin{equation}
F = \braket{\psi_\text{out}| \mathrm{Tr}_R\left(\Omega \right) |\psi_\text{out}},
\end{equation}
where $\mathrm{Tr}_R$ indicates the partial trace over the routers. The routers are traced out because only the address and bus registers are passed on to whatever algorithm has queried the QRAM; the routers are ancillae whose only purpose is to facilitate the implementation of operation~(\ref{eq:QRAM_def}). 

As illustrated by the amplitude-damping example, the problem with the fanout implementation is that the no-error state $\Omega_\mathrm{ideal}$ is generally the only state in the mixture~(\ref{eq:fanout_mixture}) with high fidelity.  Neglecting the low-fidelity states, the query infidelity scales as
\begin{align}
1-F &\sim \varepsilon N T,
\end{align}
to leading order in $\varepsilon$. We refer to this linear scaling of the infidelity with the memory size as \emph{unfavorable} because error probabilities $\varepsilon \ll 1/NT$ are required to perform queries with near-unit fidelity.  This stringent requirement severely constrains the size of fanout QRAMs. For example, error probabilities $\varepsilon \sim 10^{-3}$ would restrict the maximum size of a high-fidelity fanout QRAM to less than $N\sim 100$ memory cells. While quantum error correction can be used to suppress the error rates in principle, the additional hardware overhead can be prohibitive~\cite{dimatteo2020} because all $O(N)$ routers must be error corrected (see \Cref{sec:6} for a more detailed discussion of error correction and the associated overhead). 
Thus, because of its high susceptibility to decoherence, the fanout architecture is not regarded as scalable. 

\subsection{Bucket-brigade QRAM architecture}

In Ref.~\cite{giovannetti2008}, two modifications to the fanout architecture were proposed, and it was argued that the modified architecture, 
termed the ``bucket-brigade'' [Fig.~\ref{fig1}(d)], is highly resilient to noise. The first modification is that the two-level routing qubits are replaced with three-level routing qutrits. In addition to the $\ket{0}$ (route left) and $\ket{1}$ (route right) states, each router also has a third state, $\ket{W}$ (wait). We refer to the states $\ket{0},\ket{1}$ as \emph{active}, and the state $\ket{W}$ as \emph{inactive}. 
We assume that all routers are initialized in the $\ket{W}$ state, and that the action of the routing operation [Fig.~\ref{fig1}(b)] is trivial when the routing qutrit is in the $\ket{W}$ state. 
(In \Cref{sec:5}, we show that these assumptions may be relaxed, but we make them here for concreteness.)  
Each router's incident and output modes are also now taken to be physical three-level systems, 
and each address qubit is encoded within a two-level subspace of a physical three-level system.

{
We have borrowed the terminology of \emph{active} and \emph{inactive} routers from earlier works on QRAM~\cite{giovannetti2008,arunachalam2015}. 
While these prior works assumed inactive routers to be free from decoherence, we stress that we make no such assumption here. 
As is discussed further in \Cref{subsec:2c}, we assume that routers are prone to decoherence regardless of whether they are active or inactive. 
In this work, we define the terms ``active'' and ``inactive'' only as labels for the different subspaces of a router's Hilbert space: active $\equiv$ span($\ket{0},\ket{1}$) and inactive $\equiv$ span($\ket{W}$). We do not assume that either of these subspaces has any special properties (e.g.,~different decoherence rates).   
}

The second modification is that the address qubits are themselves routed into the tree during a query. When an address qubit encounters a router in the $\ket{0}$ ($\ket{1}$) state, it is routed to the left (right) as usual. 
When an address qubit encounters a router in the $\ket{W}$ state, the states of the router and incident mode are swapped, 
so that the router's state becomes $\ket{0}$ ($\ket{1}$) when the incident address was $\ket{0}$ ($\ket{1}$). 
The physical implementation described in Ref.~\cite{giovannetti2008} provides a helpful example to visualize how these operations could be realized: the authors envisage the routers as three-level atomic systems, with the address qubits encoded in the polarization states of flying photons. (Note that the two polarization states constitute a two-level subspace of a physical three-level system, since the photonic mode may also be in the vacuum state.)  
When a photon encounters an atom in the $\ket{W}$ state, it is absorbed, and in the process it excites the atom to the $\ket{0}$ or $\ket{1}$ state conditioned on its polarization. When subsequent photons encounter the excited atom, they are routed accordingly. 
These operations can also be described using the conventional quantum circuit model, and in \Cref{appendix:circuit} we provide a full circuit diagram. 

To query the memory, the address qubits are sequentially injected into the tree at the root node. 
The first address qubit is absorbed by the router at the root node, exciting it from $\ket{W}$ to the $\{\ket{0},\ket{1}\}$ subspace in the process. 
The second address qubit is routed left or right, conditioned on the state of the router at the root node. 
The state of the first address qubit thereby dictates the routing of the second.
The second address is subsequently absorbed by one of the routers at the second level of the tree. 
The process is repeated, with the earlier addresses controlling the routing of later ones, carving out a path of active routers from the root node to the specified memory element. 
Once all address qubits have been routed into the tree, the bus qubit is routed down to the memory and the data is copied as before (see \Cref{appendix:copying_data}). Finally, the bus and all address qubits are routed back out of the tree in reverse order to disentangle the routers. Here again, we emphasize that multiple memory elements can be queried in superposition, as in~(\ref{eq:QRAM_def}), because all routing operations are performed coherently.

\subsection{Noise resilience: overview and conceptual explanation}
\label{subsec:2c}

\begin{figure*}[ht]
\centering{}\includegraphics[width=2.0\columnwidth]{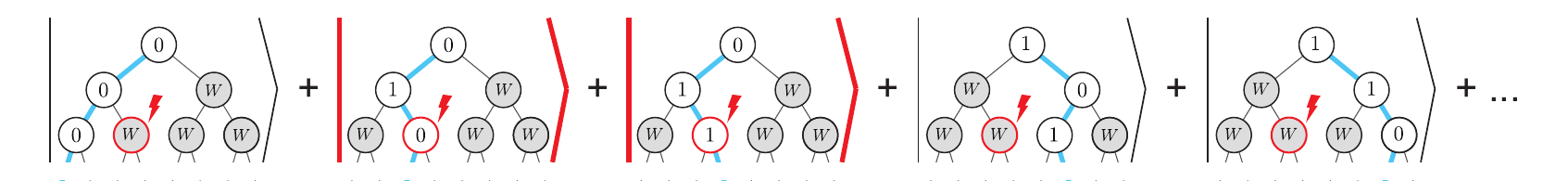}\caption{\label{fig1pt5} Conceptual picture of noise resilience. Each ket represents the state of the QRAM when a different memory element is queried, with the superposition of kets representing a superposition of queries to different elements. When a router $r$ suffers an error (red lightning bolt), it corrupts only the subset of queries where $r$ is active (indicated by thick red kets); other queries in the superposition succeed regardless. Because most routers are only active in a small fraction of queries, most queries succeed and the total infidelity is low.
}
\end{figure*}

The bucket-brigade architecture is clearly resilient to certain types of noise. For example, Ref.~\cite{arunachalam2015} studied the bucket-brigade QRAM with routers subject to $\ket{0} \leftrightarrow \ket{1}$ bit-flip errors, with the $\ket{W}$ states assumed to be error free. In this case, the expected number of errors is only $\varepsilon \log N$, because only the $\log N$ active routers are prone to errors~\footnote{This is true even when all memory elements are queried in superposition; each router is in a superposition of active and inactive states, but there are only $\log N$ active routers in each branch of the superposition corresponding to a definite address.}. The expected number of errors also scales with $\log N$ for the error model considered in Refs.~\cite{giovannetti2008,giovannetti2008a,hong2012}, where gates involving inactive routers are assumed to be error free. 

For these error models, the query infidelity is
\begin{equation}
1-F\sim \varepsilon T \log^\alpha N.
\end{equation}
to leading order in $\varepsilon$, where $\alpha$ is some constant, and we recall that $T = O(\log N)$ is the number of time steps.  We refer to this logarithmic scaling of the infidelity with the memory size as \emph{favorable} because queries can be performed with near-unit fidelity so long as the error rate satisfies $\varepsilon \ll 1/T\log^\alpha N$. This is a much more forgiving requirement; memories of exponentially larger size can be queried relative to the fanout architecture. Indeed, the exponential improvement in scalability suggests that quantum error correction is not required to query large memories with high fidelity, provided physical error rates are sufficiently low. 

Unfortunately, the above error models can be poor approximations of the noise in actual quantum hardware. In these contrived models, inactive routers are assumed to be completely free from decoherence. More realistically, all routers will be prone to decoherence, independent of whether they are active or inactive. For example, though several proposals for experimental implementations of the bucket-brigade scheme have been put forth~\cite{giovannetti2008a,hong2012,kyaw2015,hann2019,cadellans2015}, none have proposed a method of engineering routers that are free from decoherence when inactive.
While one can conceive of implementations in which inactive routers have decoherence rates which are nonzero but far smaller than those of active routers, it is not obvious whether such implementations would enjoy the favorable infidelity scaling. 
Indeed, Ref.~\cite{giovannetti2008} conjectured that decoherence of inactive routers could significantly increase the infidelity in this case, owing to the exponentially larger number of inactive routers.
Furthermore, Refs.~\cite{giovannetti2008,arunachalam2015} portray the favorable infidelity scaling as a direct consequence of the assumption that inactive routers are decoherence-free.

It is thus natural to ask whether the favorable scaling still holds when inactive routers are not assumed to be decoherence-free.
Relaxing this assumption causes the expected number of errors to increase \emph{exponentially}, from $O(\log N)$ to $O(N)$. 
Because the expected number of errors in the fanout architecture is also $O(N)$, one might naively expect that the favorable infidelity scaling no longer holds. 
However, in the next section we prove that this is not the case. Perhaps surprisingly, the infidelity of the bucket-brigade architecture still scales favorably despite the exponential increase in the expected number of errors. Moreover, the favorable scaling holds for \emph{arbitrary} error channels.

The noise-resilience of the bucket-brigade architecture can be understood intuitively as a consequence of the minimal entanglement among the routers, see \Cref{fig1pt5}. Suppose one queries all memory locations in equal superposition. Then in both the fanout and bucket-brigade architectures, all of the routers are entangled. However, the degree to which each router is entangled with the rest of the system is quite different between the two architectures. This difference can be quantified by computing the entanglement entropy for a given router
\begin{equation}
    S(\rho) = -\mathrm{Tr}\left[\rho \log \rho \right]
\end{equation}
where $\rho$ is the reduced density matrix of the router, obtained by tracing out the rest of the system. In the fanout architecture, each router is maximally entangled with the rest of the system; the reduced density matrix is the maximally mixed state $\rho = I/2$ (recall the fanout architecture employs two-level routers), for which $S(\rho) = 1$. In contrast, in the bucket-brigade architecture, the entanglement entropy of a router depends on its location within the tree.  A router at level $\ell$ (0-indexed) of the tree is only active in $N2^{-\ell}$ of the $N$ different branches of the superposition. As a result, the entanglement entropy decreases exponentially with depth, $S(\rho) \sim 2^{-\ell} $. Routers deeper down in the tree are nearly disentangled from the system, and their decoherence only reduces the query fidelity by an exponentially decreasing amount. 
Thus, despite the fact that exponentially many such errors typically occur, the overall fidelity can remain high. 
More precisely, if we posit that the infidelity associated with an error in a router at level $\ell$ scales as $\sim 2^{-\ell}$ due to the limited entanglement, and that $\varepsilon T\, 2^\ell$ such routers suffer errors on average, then the total infidelity scales as
\begin{equation}
    1-F\sim \sum_{\ell=1}^{\log N} \left(2^{-\ell}\right) \left(\varepsilon T 2^\ell\right) = \varepsilon T \log N.
\end{equation}
The infidelity scales only logarithmically with $N$ because the exponential increase in the expected number of errors with $\ell$ is precisely cancelled by the exponential decrease in the infidelity associated with each.
We rigorously justify these claims in the next section.

%%%%%%%%%%%%%%%%%%%%%%%%%%%%%%%%%%%%%%%%%%%%%%%%%%
%%%%%%%%%%%%%%%%%%%%%%%%%%%%%%%%%%%%%%%%%%%%%%%%%%
%%%%%%%%%%%%%%%%%%%%%%%%%%%%%%%%%%%%%%%%%%%%%%%%%%

\section{Proof of noise resilience}
\label{sec:3}

In this section, we prove that the query infidelity of the bucket brigade architecture is upperbounded by
\begin{equation}
1-F  \leq A\varepsilon T\log N ,
\label{eq:general_bound}
\end{equation}
where $T = O(\log N)$ is the time required to perform a query, $\varepsilon$ is the probability of error per time step, and $A$ is a constant of order $1$. This bound holds even when all $N$ memory elements are queried in superposition, and it holds for arbitrary error channels, including, e.g., depolarizing errors and coherent errors. Moreover, throughout this paper we assume no special structure in the classical data $x_i$, so our bounds hold independent of the data. 

{ Our proof is based on a careful analysis of how errors can propagate throughout the QRAM.} Accordingly, we begin by defining our error model. We suppose that each routing qutrit is subject to an error channel in the form of a generic completely-positive trace-preserving map,
\begin{equation}
\rho\rightarrow \mathcal{E}(\rho) = \sum_{i}K_m \rho K_m^\dagger, 
\label{eq:general_channel}
\end{equation}
where the Kraus operators $K_m$ obey the completeness relation $\sum_m K_m^\dagger K_m = I$. The error channel is applied simultaneously to all routers at discrete time steps throughout the query (see Eq.~(\ref{eq:omega_c}) below). In \Cref{appendix:proof}, we prove that the bound~(\ref{eq:general_bound}) holds for arbitrary error channels of the form~(\ref{eq:general_channel}). For the sake of brevity and simplicity, however, here we restrict our attention to channels where (i) there is a no-error Kraus operator, $K_0$, that is proportional to the identity, and (ii) the remaining Kraus operators are proportional to unitaries, $K_m^\dagger K_m \propto I$. Under these restrictions,
\begin{equation}
\mathcal{E}(\rho) = (1-\varepsilon)\rho + \sum_{m>0} K_m \rho K_m^\dagger,
\label{eq:mixed_unitary_channel}
\end{equation}
for some $\varepsilon \in [0,1]$.
An operational interpretation of this channel is that one of the errors $K_{m>0}$ occurs with probability $\varepsilon$, and no error occurs with probability $1-\varepsilon$. 
Experimentally relevant examples include bit-flip, dephasing, and depolarizing channels. 
The restriction to this form of mixed-unitary channel allows us to make two assumptions that greatly simplify the proof: (i) the probability that an error occurs is independent of the router state, and (ii) the no-error backaction $K_0\propto I$ is trivial. {We make no further assumptions about the Kraus operators, and we stress that they may act non-trivially on the inactive state $\ket{W}$, meaning that inactive routers can decohere.}

It is important to note that this error model only describes decoherence of the routing qutrits; a router's incident and output modes may also decohere, and there may be errors in the gates that implement the routing operation. 
At the end of this section, we prove that the bound~(\ref{eq:general_bound}) still holds when including these other errors, but we neglect them for now to simplify the discussion.

The proof proceeds by direct calculation. To bound the infidelity, we first write the final state $\Omega$ as a sum over different \emph{error configurations},
\begin{equation}
\Omega = \sum_c p(c) \Omega(c),
\label{eq:final_mixture}
\end{equation}
where an error configuration $c$ specifies which Kraus operator is applied to each router at each time step.
Here, $p(c)$ is the probability of configuration $c$, and the pure state $\Omega(c) = \ket{\Omega(c)}\bra{\Omega(c)}$ is the corresponding final state of the system (both quantities are defined more formally below). 
The fidelity is thus given by,
\begin{equation}
\label{eq:fid_sum_over_c}
F = \sum_c p(c) F(c),
\end{equation}
where 
\begin{equation}
F(c) = \braket{\psi_\text{out}|\mathrm{Tr}_R\Omega(c)|\psi_\text{out}}
\label{eq:configuration_fidelity}
\end{equation}
is the query fidelity of the state $\Omega(c)$. Our approach is to place an upper bound on the infidelity by deriving an upper bound on $1-F(c)$.

Let us formally define $\Omega(c)$ and $p(c)$. A QRAM query consists of $O(N)$ routing operations [Fig.~\ref{fig1}(b)] performed in a predetermined sequence.
By design, many of these operations commute and can be performed in parallel, so that the entire operation can be written as a quantum circuit with depth $T=O(\log N)$ (see \Cref{appendix:circuit} for circuit diagram).
More precisely, operation~(\ref{eq:QRAM_def}) can be written as,
\begin{equation}
\ket{\psi_\text{out}}\ket{\mathcal W} = U_T \ldots U_2 U_1 \ket{\psi_\text{in}}\ket{\mathcal W},
\end{equation}
where $\ket{\mathcal W} = \ket{W}^{\otimes (N-1)}$ is the initial state of the routers, and $U_t$ is a constant-depth circuit. 
Now, let $K_{c(r,t)}$ denote the Kraus operator applied to router $r$ at time step $t$, and define the composite Kraus operator $K_{c(t)} = \bigotimes_{r=1}^{N-1}K_{c(r,t)}$ [see Fig.~\ref{fig2pt5}(a)]. 
The final state $\ket{\Omega(c)}$ is
\begin{equation}
\ket{\Omega(c)} =\frac{1}{\sqrt{p(c)}} \left[U_T K_{c(T)} \ldots U_1K_{c(1)}\right] \ket{\psi_\text{in}}\ket{\mathcal{W}},
\label{eq:omega_c}
\end{equation}
The requirement that $\ket{\Omega(c)}$ is normalized defines the probability $p(c)$ of obtaining state $\Omega(c)$ in the  mixture~(\ref{eq:final_mixture}).
Note that $\sum_c p(c) = 1$ follows from the Kraus operators' completeness relation.

\begin{figure}[tb]
\centering{}\includegraphics[width=1.0\columnwidth]{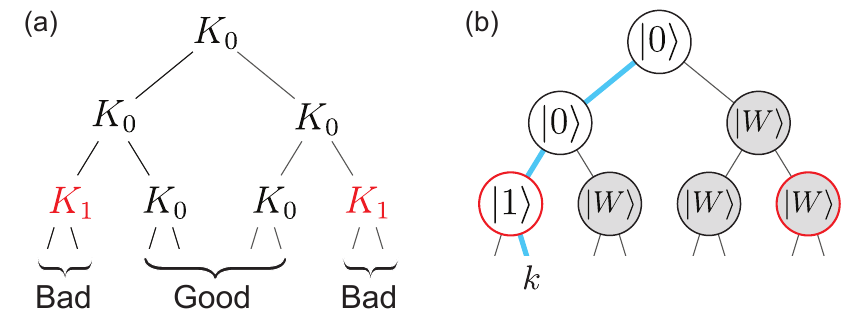}\caption{\label{fig2pt5} Error configurations. (a) Example composite Kraus operator $K_{c(t)}$. The single-router Kraus operators $K_{c(r,t)}$ comprising the tensor product $K_{c(t)}$ are arranged geometrically according to the routers on which they act. Branches of the tree are classified as either good or bad according to the locations of the errors $K_{m>0}$. 
(b) Query to an element $k \not\in g(c)$. Routers are labelled with their ideal, error-free states, and routers outlined in red suffer errors. Because one of the active routers suffers an error, the query is liable to fail. 
}
\end{figure}

For a given error configuration $c$, it is convenient to classify branches of the tree as either \emph{good} or \emph{bad}, depending on whether errors $K_{m>0}$ are ever applied to the routers in the branch [Fig.~\ref{fig2pt5}(a)]. More precisely, let $\mathbf{i}$ denote the set of all routers in the $i$-th branch of the tree (corresponding to address $i$), and let $\mathbf{c}$ denote the set of all routers which have an error $K_{m>0}$ applied to them at some time step. A branch $i$ is defined to be good if $\mathbf{i} \cap \mathbf{c} = \varnothing$, and bad otherwise. To keep the notation simple, we use $g(c)$ to denote set of good branches.
As illustrated in Fig.~\ref{fig2pt5}(b), queries to addresses  $i\not \in g(c)$ are liable to fail because they rely on routers that suffer errors. 

\begin{figure}[htbp]
\centering{}\includegraphics[width=1.0\columnwidth]{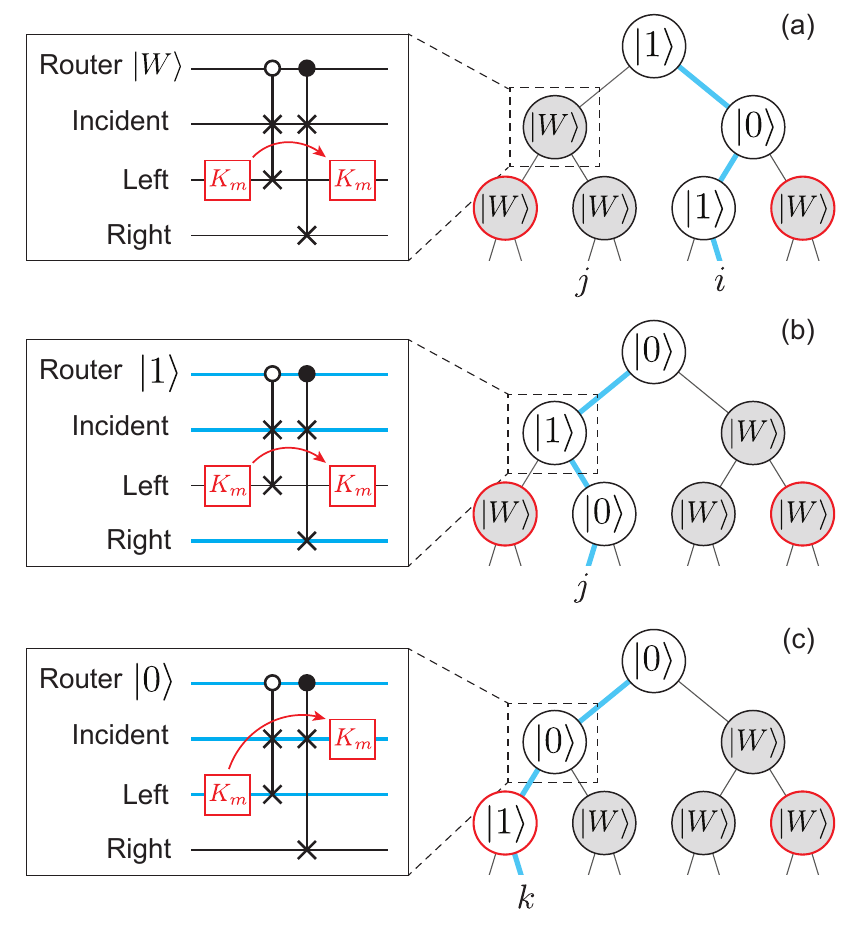}\caption{\label{fig2} Error propagation.
(a,b) Constrained propagation during queries to elements $\in g(c)$. The error in the leftmost router can propagate upward into the left output of the router indicated by the dashed box. The circuits on the left show that the error does not propagate further, regardless of whether the router is inactive (a) or active (b). 
In the circuit diagrams, red boxes denote errors $K_{m>0}$, and the red arrows indicate how the error propagates (i.e.~how the error transforms under conjugation by the routing operation). 
(c) Error propagation is not constrained during queries to elements $\not\in g(c)$.
Note that the state of the router dictates how the error propagates in these examples. 
}
\end{figure}

The main observation underlying our proof is that the propagation of errors is constrained when memory elements $\in g(c)$ are queried. Roughly speaking, errors do not propagate from bad branches into good branches.
More precisely, for any $i,j\in g(c)$, errors do not propagate into branch $j$ during a query to element $i$. We illustrate this fact with two examples, shown in Figs.~\ref{fig2}(a,b).
In general, errors in the bad branches can propagate. They can even propagate into an output mode of a router $r$ in branch $j\in g(c)$, but they can never propagate into branch $j$.
Fig.~\ref{fig2}(a) shows an example of how such an error propagates through $r$'s routing operation in the case where a memory element $i\neq j$ is queried.
Because $j\in g(c)$, $r$'s routing qutrit suffers no errors and is thus in $\ket{W}$. 
The action of the routing operation is trivial for a router in $\ket{W}$, so the error does not propagate to other modes. 
(We reiterate that we are assuming error-free gates; gate errors are discussed at the end of this section.) 
Similarly, Fig.~\ref{fig2}(b) shows an example of how errors propagate in the case where $j$ is queried.   
The error-free routing qutrit is in $\ket{1}$, so the routing operation acts non-trivially on only the incident and right output modes. The error in the left output mode does not propagate upward. For comparison, in \Cref{fig2}(c) we illustrate that the propagation of errors is not constrained in this way when memory elements $k\not\in g(c)$ are queried. As an aside, we note that the constrained error propagation can be understood as a sort of error transparency~\cite{vy2013,kapit2018,ma2020}: when elements $\in$ g(c) are queried, the errors in the bad branches commute with the routing operations in the good branches.

The constrained propagation of errors has two important consequences. The first is that a query to memory element $i\in g(c)$ always succeeds, meaning that the address and bus registers are in the desired state $\ket{i}^A\ket{x_i}^B$ at the end of the query. This follows from the fact that errors cannot propagate to any of the routers in branch $i$. 
The second consequence is that, 
if multiple memory elements $i,j,\ldots \in g(c)$ are queried in superposition, the address and bus registers are disentangled from the routers at the end of the query. 
This follows from the fact that errors are restricted to propagate within the bad branches, and their propagation is unaffected by routers outside these branches. Figs.~\ref{fig2}(a,b) provide an example. As a result, even though errors can propagate non-trivially among the bad branches during the query,
the final state of the routers is independent of which memory element in $ g(c)$ is queried.

It follows that the final state $\ket{\Omega(c)}$ can be written as
\begin{equation}
\ket{\Omega(c)} = \ket{\mathrm{good}(c)} + \ket{\mathrm{bad}(c)},
\label{eq:final_system_state}
\end{equation} 
with 
\begin{equation}
   \ket{\mathrm{good}(c)}=\left(\sum_{i\in g(c)}  \alpha_i \ket{i}^A\ket{x_i}^B\right)\ket{f(c)}^R.
   \label{eq:final_system_state_good}
\end{equation}
Here, $\ket{f(c)}^R$ denotes the final state of the routers with respect to the good branches, and $\ket{\textrm{bad}(c)}$ contains the $i\not\in g(c)$ terms. 
We now use the expression~(\ref{eq:final_system_state}) to place a lower bound on $F(c)$. First notice that
\begin{equation}
    F(c)\geq \left|\braket{\psi_\mathrm{out},f(c)|\Omega(c)}\right|^2,
    \label{eq:basic_bound}
\end{equation}
which can be obtained by performing the partial trace in Eq.~(\ref{eq:configuration_fidelity}) using a basis that contains the state $\ket{f(c)}$ and neglecting the contributions from other states. Then, defining $\Lambda(c)$ as the weighted fraction of good branches, 
\begin{equation}
\Lambda(c) = \braket{\mathrm{good}(c)|\mathrm{good}(c)} = \sum_{i\in g(c)} |\alpha_i|^2
\end{equation}
we have that
\begin{align}
    \braket{\psi_\mathrm{out},f(c)|\mathrm{good}(c)} &= \Lambda(c) \label{eq:good_overlap} \\
    \left|\braket{\psi_\mathrm{out},f(c)|\mathrm{bad}(c)}\right| &\leq 1-\Lambda(c).
    \label{eq:bad_overlap}
\end{align}
To obtain the inequality~(\ref{eq:bad_overlap}) we have used the fact that $\ket{\Omega(c)}$ is normalized and that $\braket{\mathrm{good}(c)|\mathrm{bad}(c)} = 0$. The latter follows from the orthogonality of different initial address states, $\braket{i|j}^A = 0$ for $i\neq j$, and the fact that all subsequent operations, including the Kraus operators, are unitary and thus preserve inner products (this follows from our earlier restriction to mixed-unitary error channels; general channels are covered by the proof in \Cref{appendix:proof}). 
Plugging Eqs.~(\ref{eq:final_system_state}),~(\ref{eq:good_overlap}) and~(\ref{eq:bad_overlap}) into the bound~(\ref{eq:basic_bound}) and applying the reverse triangle inequality allows us to bound the infidelity as a function of $\Lambda(c)$,
\begin{equation}
    F(c) \geq 
    \begin{cases} 
      (2\Lambda(c)-1)^2, & \Lambda(c)\geq 1/2, \\
      0, & \Lambda(c) < 1/2.
      \label{eq:config_fid_bound}
   \end{cases}
\end{equation}

To proceed further, we compute the expected fraction of good branches, $\mathbb{E}(\Lambda)$, where the expectation value is taken with respect to the distribution of error configurations, i.e.~$\mathbb{E}(f) = \sum_c p(c) f(c)$. This expectation value can be computed recursively for trees of increasing depth. Let $\mathbb{E}_d(\Lambda)$ denote the expected fraction of good branches for a depth-$d$ tree. For a depth-1 tree, expected fraction is equivalent to the probability that the lone router never suffers an error, $\mathbb{E}_1(\Lambda)=(1-\varepsilon)^T$.
For deeper trees, the expected fraction of error-free routers at each level is $(1-\varepsilon)^T$, so we have the recursive rule
\begin{equation}
    \mathbb{E}_{d+1}(\Lambda) = (1-\varepsilon)^T\mathbb{E}_{d}(\Lambda).
\end{equation}
Applying this rule to the initial condition $\mathbb{E}_1(\Lambda)$, we obtain 
\begin{equation}
    \mathbb{E}_{\log N}(\Lambda) = (1-\varepsilon)^{T\log N}.
\end{equation}

We can now combine the above results to bound the infidelity. We have that
\begin{align}
    \label{eq:infid_bound_intermediate0}
    F &= \mathbb{E}(F) \geq \mathbb{E}(\sqrt F)^2 \\
    \label{eq:infid_bound_intermediate}
    &\geq \left[ 2\mathbb{E}_{\log N}(\Lambda)-1\right]^2 \\
    \label{eq:infid_bound_intermediate2}
    &= \left[2(1-\varepsilon)^{T\log N}-1\right]^2,
\end{align}
where the second inequality follows from~(\ref{eq:config_fid_bound})  under the assumption that $\mathbb{E}(\Lambda_{\log N})\geq{1/2}$.
Applying Bernoulli's inequality yields the desired result,
\begin{align}
\label{eq:infid_bound}
   1-F \leq  4\varepsilon T \log N,
\end{align}
which holds for 
$\varepsilon {T\log N} \leq 1/4$. 
{
This bound is our main result, and we stress that it holds even when all $N$ elements are queried in superposition, and that it was derived under the assumption that all routers are susceptible to decoherence, regardless of whether they are active or inactive.}
%We stress that this bound holds even when all $N$ elements are queried in superposition. 

We offer two additional remarks on the proof. First, we reiterate that while the above proof holds only for mixed-unitary error channels, in fact the favorable infidelity scaling holds for arbitrary error channels, which we prove in \Cref{appendix:proof}. Second, the favorable scaling can be interpreted as a consequence of the limited entanglement among the routers, as discussed in \Cref{sec:2}. This limited entanglement manifests in \Cref{eq:final_system_state,eq:final_system_state_good}.
The fact that a router at level $\ell$ is active in only $N2^{-\ell}$ of the $N$ branches implies both that the router's entanglement entropy decreases exponentially with $\ell$, and that only $N2^{-\ell}$ branches are corrupted when it suffers an error. 

We conclude this section by describing four simple extensions of the proof that cover other cases of interest:

\textit{1.~Initialization errors.} Suppose that each router has some probability $\varepsilon$ of not being initialized to $\ket{W}$ prior to the query. Such errors can be viewed as router errors of the form (\ref{eq:general_channel}) that occur during the $0$-th time step. As such, they are also covered by the proof provided one replaces $T\rightarrow T+1$ in the equations above. In \Cref{sec:5}, we show that, in fact, one can make an even stronger statement: the infidelity scales favorably even when the QRAM is initialized in an arbitrary state.

\textit{2.~Gate errors.} Faulty implementation of the routing operation can be described without loss of generality as a composition $\mathcal D \circ \mathcal R$, of some error channel $\mathcal D$ followed by the ideal routing operation $\mathcal R$. Provided that $\mathcal D$'s Kraus operators are proportional to unitaries, and that there is a no-error Kraus operator proportional to the identity, then $\mathcal D$ can also be written in the form~(\ref{eq:mixed_unitary_channel}), and the proof proceeds as above. Note that the propagation of errors is still constrained in the case of gate errors because all routing gates in good branches are error-free by construction. 

{
\textit{3.~Alternate gate sets.} We have defined the routing operation as a sequence of two controlled-SWAP gates~[Fig.~1(b)], but this same operation could also be decomposed into other types of gates, e.g.~into Toffolis, or Clifford + T gates. The bound (28) holds for any choice of gate decomposition. 
To see that the bound holds, consider that any error that propagates non-trivially through a given routing operation can be categorized as occurring either $\emph{before}$ or $\emph{during}$ that operation. 
The propagation of errors that occur before the operation is determined solely by the conjugation of the error with the entire routing operation (Fig.~4), which is unaffected by the choice of decomposition. 
In contrast, the propagation of errors that occur during the operation will generally depend on the choice of the decomposition. However, such errors can equivalently be described as a faulty implementation of the routing operation itself, so they do not spoil the favorable error scaling by the argument in the previous paragraph. 

\textit{4.~Correlated errors.} The noise resilience also persists in the presence of correlated errors that afflict a constant number of adjacent routers in the tree. The proof assumes that if any error (correlated or otherwise) occurs in a branch, then that branch does not contribute to the fidelity. 
As such, whether an error afflicts only a given router $r$ or also some of $r$'s child routers lower in the tree is irrelevant to the proof. The effects of correlated errors can thus be incorporated simply by augmenting $\varepsilon$ to also include the probability that a router is among those afflicted by a correlated error. 
For correlated errors afflicting only a constant number of adjacent routers, the resulting increase in $\varepsilon$ is independent of $N$, so the query infidelity still scales only polylogarithmically with $N$. 
}

%%%%%%%%%%%%%%%%%%%%%%%%%%%%%%%%%%%%%%%%%%%%%%%%%%
%%%%%%%%%%%%%%%%%%%%%%%%%%%%%%%%%%%%%%%%%%%%%%%%%%
%%%%%%%%%%%%%%%%%%%%%%%%%%%%%%%%%%%%%%%%%%%%%%%%%%

\section{Classical simulation of noisy QRAM circuits}
\label{sec:4}

In this section, we verify the bound (\ref{eq:infid_bound}) through numerical simulation of noisy QRAM circuits. While full state vector simulations require $\exp(N)$ memory and quickly become intractable as the QRAM size grows, our simulations are enabled by a novel classical algorithm with space and time complexity $\mathrm{poly}(N)$. 

The main observation underlying the algorithm is that any quantum circuit consisting of the following elements can be simulated efficiently classically: state preparation in the computational basis, and gates from the set \{SWAP, controlled-SWAP\}. Such circuits are essentially classical---the system begins in a definite computational basis state, and the SWAP-type gates act only as permutations so that the system remains in a computational basis state through every step of the circuit. The simulation proceeds simply by tracking the (classical) state of the system. 
Furthermore, for initial states that are a superposition of polynomially-many different computational basis states, it follows from linearity that the action of any circuit composed of these SWAP-type gates can also be efficiently simulated.
QRAM circuits can thus be efficiently simulated because they consist of SWAP-type gates acting on $O(N)$ qubits or qutrits, and the system is initialized in a superposition of only $O(N)$ computational basis states (one for each address). 
In fact, QRAM circuits are examples of so-called efficiently computable sparse (ECS) operations, whose efficient classical simulation is described in Ref.~\cite{nest2010}.

For context, we note that this approach is similar in spirit to the Gottesman-Knill theorem~\cite{gottesman1998}, which states that any Clifford circuit with preparation and measurement in the computational basis can be simulated classically in polynomial time. Because QRAM circuits necessarily employ non-Clifford gates (controlled-SWAP), however, the theorem does not directly apply. Still, the similarities are apparent: restricting the allowed gates and state preparations enables an efficient classical description of the system, making efficient simulation possible. 

In addition, for a wide variety of error models, noisy QRAM circuits can be simulated efficiently using Monte Carlo methods. 
To simulate noisy circuits, the space of error configurations is randomly sampled according to the distribution $p(c)$. For each sampled configuration $c$ from a set of samples $S$, we compute the final system state $\ket{\Omega(c)}$, and we obtain the fidelity by averaging $F = \frac{1}{|S|}\sum_{c\in S} F(c)$.
This sampling procedure is efficient provided that two criteria are satisfied: first, that the state $\ket{\Omega(c)}$ is efficiently computable, and second, that sampling from $p(c)$ is efficient.
A sufficient condition for satisfying these two criteria is that the error channel maps computational basis states to other computational states, i.e., the channel's Kraus operators $K_m$ satisfy
\begin{equation}
K_m \ket{i} \propto \ket{i'},
\label{eq:computational_errors}
\end{equation} 
for all $m$, and where $\ket{i}, \ket{i'} \in \{\ket{0},\ket{1},\ket{W} \}$ are computational basis states.
The first criterion is satisfied because Eq.~(\ref{eq:computational_errors}) guarantees that a QRAM circuit interspersed with applications of the Kraus operators $K_m$ is still ECS. The second criterion is satisfied because the distribution $p(c)$ can be sampled efficiently by applying errors independently to each router (with appropriate probability) at each time step as the simulation proceeds.
In detail, suppose that at time $t$ the system is in a state $\ket{\psi(t)}$ that is a superposition of polynomially-many computational basis states, 
\begin{equation}
\ket{\psi(t)} = \sum_{\{i_1, \ldots i_{N-1}\}\in C} \alpha_i \ket{i_1, i_2, \ldots ,i_{N-1} },
\end{equation}
where $\ket{i_r}$ denotes the state of router $r$, and the cardinality of the set $C$ is $O(\mathrm{poly}N)$. The probability that a Kraus operator $K_m$ is applied to router $r$ is 
\begin{equation}
\mathrm{Tr} \left[K_m^\dagger K_m \rho_r \right],
\end{equation}
where $\rho_r(t) = \mathrm{Tr}_{\bar{r}} (\ket{\psi(t)}\bra{\psi(t)})$ is the reduced density matrix of router $r$, with $\mathrm{Tr}_{\bar{r}}$ denoting the partial trace over the rest of the system. \Cref{eq:computational_errors} guarantees that this probability is efficiently computable, so sampling from the possible errors at time $t$ is also efficient. This sampling procedure is repeated at each time step in order to sample from the full error configuration. 

\begin{figure*}
\centering{}\includegraphics[width=2.0\columnwidth]{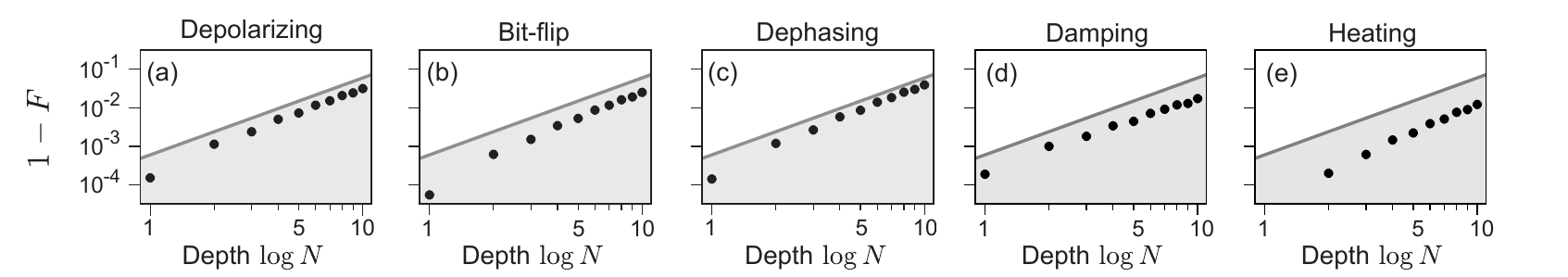}\caption{\label{fig3} 
Favorable error scaling. For a variety of error channels, the query infidelity (black dots) is calculated numerically and plotted as a function of the tree depth $\log N$ (note the logarithmic scaling on both axes).  The region defined by the upper bound (\ref{eq:infid_bound}) is shown in gray in each plot. 
Plotted infidelities are averages over many randomly generated binary data sets $\{x_0,\ldots x_{N-1}\}$. Each such data set is generated by randomly choosing each $x_i$ to be 0 or 1 with equal probability. Error bars are smaller than the dot size. The error rate for all plots is $\varepsilon = 10^{-4}$.
}
\end{figure*}

We apply this algorithm in order to compute the query infidelity for QRAM circuits with routers subject to a variety of noise channels. The results (\Cref{fig3}) confirm that the QRAM query infidelity scales favorably in the presence of realistic noise channels acting on all of the memory's components. We stress that, for such channels, the expected number of errors generally scales linearly with $N$. Results for qutrit depolarizing, bit-flip, and dephasing channels are shown in panels (a), (b), and (c), respectively (see \Cref{appendix:channels} for Kraus decompositions for each channel). These channels are all of the form (\ref{eq:mixed_unitary_channel}), so the query fidelity is subject to the bound (\ref{eq:infid_bound}). The numerical results are all clearly consistent with this bound, and the expected $1-F\propto \log^2 N$ scaling is evident on the log-log scale. In panels (d) and (e), we show numerical results for qutrit decay and heating channels (\Cref{appendix:channels}). We find that the query fidelities for these channels also satisfy the bound (\ref{eq:infid_bound}). Note, however, that the decay and heating channels are not mixed-unitary channels, so the query fidelities are subject to the general bound derived in \Cref{appendix:proof}, rather than \Cref{eq:infid_bound}.

%%%%%%%%%%%%%%%%%%%%%%%%%%%%%%%%%%%%%%%%%%%%%%%%%%
%%%%%%%%%%%%%%%%%%%%%%%%%%%%%%%%%%%%%%%%%%%%%%%%%%
%%%%%%%%%%%%%%%%%%%%%%%%%%%%%%%%%%%%%%%%%%%%%%%%%%

\section{Noise resilience with two-level routers}
\label{sec:5}

In \Cref{sec:3}, we proved that the query infidelity of the bucket-brigade QRAM scales favorably, even when inactive routers are subject to decoherence. It is thus natural to ask whether distinguishing between active and inactive routers is useful, and in fact whether the use of three-level routers is necessary in the first place. In this section, we show that the answer is no---the query infidelity still scales only {polylogarithmically} for QRAMs constructed from noisy two-level routers.
{As in \Cref{sec:3}, the argument presented to justify this claim is based on a careful analysis of how errors propagate.} Furthermore, we show that this same argument also reveals that noise resilience persists when the QRAM is initialized in an arbitrary state, and when the routing circuit [\Cref{fig1}(b)] is modified. Taken together, the results in this section show that the noise resilience of the bucket-brigade scheme is a robust property that is insensitive to implementation details. They also show that existing experimental proposals~\cite{hann2019,hong2012} employing two-level routers are noise-resilient.

\begin{figure}
\centering{}\includegraphics[width=\columnwidth]{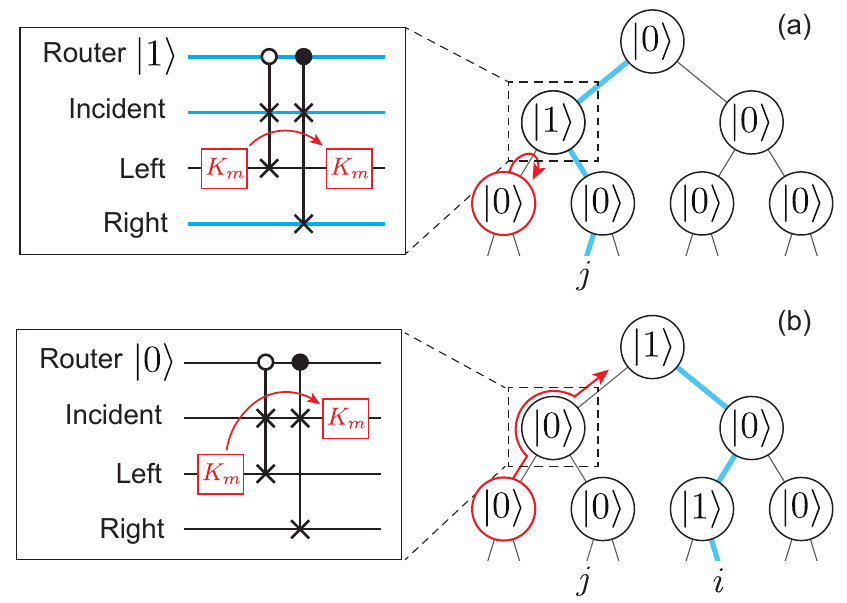}\caption{\label{fig4} Error propagation with two-level routers. (a) A query to memory element $j\in g(c)$, with an error $K_{m>0}$ applied to the red-outlined router. The circuit on the left shows how the error propagates through the router indicated by the dashed box. In this case, the error does not propagate into branch $j$. (b) A query to a different memory element $i\in g(c)$. In this case, the error propagates upward into branch $j$, in contrast to the situation in (a). 
}
\end{figure}

Consider a QRAM constructed from routers with only two states: $\ket{0}$ (route left) and $\ket{1}$ (route right). Routers are thus always active.   For concreteness, we suppose that the routing operation is implemented using the circuit in \Cref{fig1}(b), and that all routers are initialized in $\ket{0}$, though these assumptions can be relaxed. Unfortunately, the proof from \Cref{sec:3} cannot be directly applied to show that the query fidelity also scales favorably in this case. The proof fails in the case of two-level routers because the propagation of errors is no longer so highly constrained. 
Recall that in the case of three-level routers, errors do not propagate from bad branches into good branches. More precisely, for any $i,j \in g(c)$, errors do not propagate into branch $j$ when branch $i$ is queried. This is not the case for two-level routers: while errors do not propagate into branch $i$ when branch $i$ is queried, they can propagate into other branches $j$, as illustrated in \Cref{fig4}. Because of this difference, when multiple memory elements $i, j, \ldots \in g(c)$ are queried in superposition, it is not guaranteed that the address and bus registers will be disentangled from the routers at the end of the query.  Thus, \Cref{eq:final_system_state,eq:final_system_state_good} no longer hold. Instead, the final state $\ket{\Omega(c)}$ is given by 
\begin{equation}
\ket{\Omega(c)}=    
 \sum_{i\in g(c)} \alpha_i \ket{i}^A\ket{x_i}^B\ket{f_i(c)}^R + \ket{\mathrm{bad}(c)}, 
\end{equation}
where $\ket{f_i(c)}$ denotes the now address-dependent final state of the routers, and $\ket{f_i(c)} \neq \ket{f_j(c)}$ in general.  
As a result, the $i,j\in g(c)$ terms are no longer guaranteed to be in coherent superposition after tracing out the routers. Rather, the final state of the address-bus system is liable to contain an incoherent mixture of these terms. That is, the final density matrix can contain terms of the form $\ket{i,x_i}\bra{i,x_i}$ and $\ket{j,x_j}\bra{j,x_j}$ without $\ket{i,x_i}\bra{j,x_j}$ or $\ket{j,x_j}\bra{i,x_i}$ terms.  This loss of coherence reduces the fidelity.

We now proceed to estimate this reduction in fidelity. We find that the reduction is mild, such that the infidelity still scales only polylogarithmically with the memory size. Our approach is to isolate the subset of branches in $g(c)$ for which the sort of damaging error propagation described above does not occur. Explicitly, we define the subset $\tilde g(c) \subseteq g(c)$ as the largest subset such that for any $i,j\in \tilde g(c)$ errors do not propagate into branch $j$ during a query to element $i$. We then have that
$\ket{f_i(c)} = \ket{f_j(c)}$ by the same argument as given in \Cref{sec:3}. It follows that, if multiple memory elements in $\tilde g(c)$ are queried in superposition, the address and bus registers will be disentangled from the routers at the end of the query.

Having defined $\tilde g(c)$ as the subset of good branches without damaging error propagation, we are free to define all other branches as bad and then proceed exactly as in \Cref{sec:3}. In particular, we analogously define
\begin{equation}
\tilde \Lambda(c) =  \sum_{i\in \tilde g(c)} |\alpha_i|^2
\end{equation}
as the weighted fraction of good branches, and 
\begin{equation}
    F \geq [ 2 \mathbb E(\tilde \Lambda) - 1 ]^2,
\end{equation}
follows as the analog of \Cref{eq:infid_bound_intermediate}. Because $\tilde g(c) \subseteq g(c)$, we have that
\begin{equation}
    \mathbb E(\tilde \Lambda) = (1-\delta) \mathbb E
    (\Lambda),
\end{equation}
for some $\delta \in [0,1]$ to be determined. Proceeding as in \Cref{sec:3}, it follows that the infidelity satisfies the bound
\begin{equation}
    \label{eq:two_level_bound_intermediate}
    1-F\leq 4\varepsilon T \log N + 4\delta
\end{equation}
assuming $\varepsilon{T\log N} +\delta \leq 1/4$.

We can estimate $\delta$ by computing the average probability that errors propagate from bad branches into good branches. More specifically, we compute the probability that an error propagates into a branch $i \in g(c)$ when some other branch $j \in g(c)$ is queried.
Suppose that a router $r$ suffers an error at time step $t$, and let $P_{r\rightarrow i}(t)$ denote the probability of this error propagating into branch $i$. Then to leading order in $\varepsilon$, 
\begin{equation}
    \delta = \varepsilon \sum_{r,t}  P_{r\rightarrow i}(t) + O(\varepsilon ^2),
\end{equation}
which can be understood as the total probability that an error occurs and propagates into branch $i$.
To compute $\sum_{r,t}  P_{r\rightarrow i}(t)$ to leading order, we observe that
errors are generally free to propagate from a router's left output to its input, as illustrated in \Cref{fig4}(b). This is because, by default, all routers are initialized in $\ket{0}$, for which the routing operation swaps the states at the incident and left ports. In contrast, for an error to propagate upward from a router's right output, an additional error would be required to flip the router from $\ket{0}$ to $\ket{1}$. Thus, only the errors which can reach branch $i$ by propagating upward exclusively through the left outputs of routers contribute to $\sum_{r,t}  P_{r\rightarrow i}(t)$ to leading order in $\varepsilon$. A conservative overestimate is thus obtained by first enumerating all routers $r$ that are connected to $i$ through the left ports of other routers, then pessimistically taking $P_{r\rightarrow i}(t) = 1$ for each. There are at most $\log^2 N$ such routers, so 
\begin{equation}
    \delta \leq \varepsilon T \log^2 N +O(\epsilon^2).
\end{equation}
Substituting this expression into \Cref{eq:two_level_bound_intermediate}, we obtain
\begin{equation}
    \label{eq:two_level_bound}
 1-F \lesssim 4\varepsilon T \left(\log N +\log^2 N\right).
\end{equation}
Here we use the symbol $\lesssim$ to contrast this bound with \Cref{eq:infid_bound}; we proved the bound (\ref{eq:infid_bound}) rigorously, while we have obtained \Cref{eq:two_level_bound} through a scaling argument. As such, it is appropriate to focus only on the scaling of \Cref{eq:two_level_bound}. We see that the infidelity still scales only polylogarithmically with the memory size, indicating that a bucket-brigade QRAM constructed from noisy two-level routers also exhibits noise resilience. Note, however, that the infidelity here scales with $\log^3 N$ [recall $T= O(\log N$)], as opposed to $\log^2 N$ in the case of three-level routers. Both scalings are still favorable according to our definition, but the discrepancy indicates that three-level routers impart better noise resilience than two-level routers.

We simulate noisy QRAM circuits with two-level routers in order to verify this noise resilience. Simulation results are shown in \Cref{fig5}. For all noise channels simulated, the query infidelity is observed to scale polylogarithmically with the memory size, as expected. Moreover, the observed scaling exponents are $\leq3$ in all cases, consistent with the pessimistic $1-F\sim\log^3 N$ scaling given above.

\begin{figure*}[htbp]
\centering{}\includegraphics[width=2.0\columnwidth]{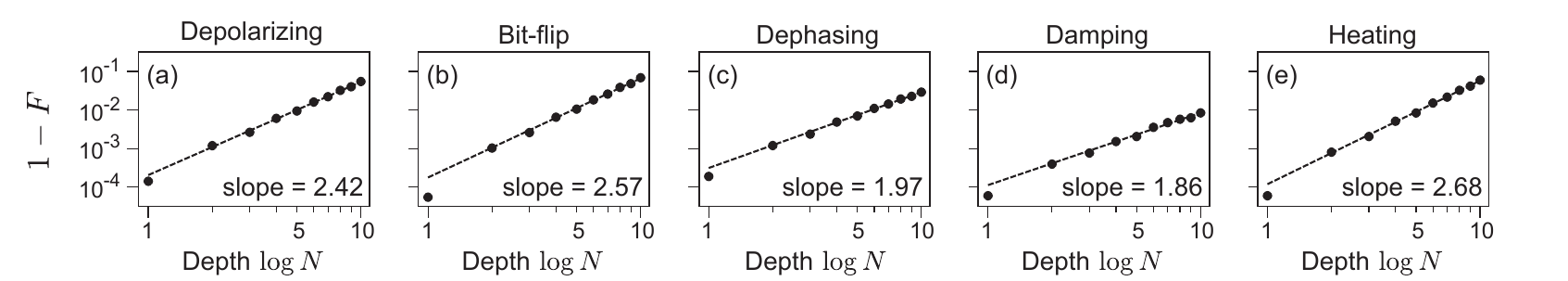}\caption{\label{fig5} Favorable error scaling with two-level routers. For a variety of error channels, the query infidelity (black dots) is calculated numerically and plotted as a function of the tree depth $\log N$.  Linear fits for each data set are shown as dashed lines, with the corresponding slopes given on each plot. Fits are performed only on data points with $\log N \geq 3$ so that the slopes are not skewed by finite-size effects at small $\log N$. Slopes $\leq3$ are consistent with the scaling argument in the text. The error rate for all plots is $\varepsilon = 10^{-4}$.
}
\end{figure*}

It is interesting to note that two-level routers are more resilient to certain noise channels than others, as quantified by the observed differences in scaling exponents. 
For example, the infidelity under the dephasing channel is observed to scale approximately as $1-F\sim \log^{2} N$.
{  
This relatively mild scaling can be explained as follows.
When the dephasing errors are propagated through the QRAM circuit, they may act non-trivially on the final state of the address and bus registers, but they act trivially on the final state of the routers (the all-$\ket{0}$ state).
As a result, the final state of the routers is the same for every address: $\ket{f_i(c)} = \ket{f_j(c)}$ for all $i,j$. Hence, $\tilde g(c) = g(c)$, and the bound from \Cref{sec:3} applies.} For the other channels, $\tilde g(c)\neq g(c)$ in general, consistent with observed scaling exponents $>2$. The case of amplitude damping is also interesting to consider: the expected number of errors for this channel is only $\varepsilon T \log N$ because only $\log N$ excitations are injected into the tree. Because $T = O(\log N)$, one expects the infidelity to scale with $\log^2 N$. The observed slope of $1.86$ is somewhat smaller owing to the fact that, in our simulations, excitations are only susceptible to damping while they reside in the tree.

The scaling argument presented in this section also suffices to show that the noise resilience persists in two other interesting situations: when the QRAM is initialized in an arbitrary state, and when the routing circuit is modified. 
Regarding initialization, observe that the above argument is straightforwardly modified to cover the case where all routers are initialized in $\ket{1}$ rather than $\ket{0}$. Indeed, such an argument holds regardless of whether a given router is initialized in $\ket{0}$ or $\ket{1}$.
It follows that the query infidelity scales favorably when the QRAM is initialized in an arbitrary state~\footnote{This observation is distinct from the observation of Refs.~\cite{low2018,berry2019} that the ancillary qubits used to perform a query can be ``dirty.'' See \Cref{appendix:copying_data}.} (though some additional care must be taken when copying data to the bus---see \Cref{appendix:copying_data} for details). 
This observation has great practical utility, as it means that QRAM can be constructed even from physical components that cannot reliably be initialized to a particular state. 

Regarding modifications to the routing circuit, it is helpful to consider an example. 
In Ref.~\cite{hann2019} a modified routing circuit was proposed in which one of the controlled-SWAP gates in \Cref{fig1}(b) is replaced by a SWAP gate. This modification has nontrivial effects on how errors propagate. With the modified circuit, errors can propagate from bad branches into good branches even when three-level routers are used. However, this is the same sort of damaging error propagation as is illustrated in \Cref{fig4}. Indeed, from the perspective of error propagation, the effect of this modification to the routing circuit is equivalent to replacing three-level routers with two-level routers. Accordingly, the argument above can be directly applied to show that the favorable scaling persists with the modified circuit.
This example demonstrates that noise resilience is not a specific feature of the routing circuit [\Cref{fig1}(b)]. 

Taken together, the results from this section demonstrate that the noise resilience of the bucket brigade architecture is a robust property that is insensitive to implementation details. 
This observation affords a great deal of freedom to experimentalists in deciding how the routers and routing operations could be implemented in practice.

%%%%%%%%%%%%%%%%%%%%%%%%%%%%%%%%%%%%%%%%%%%%%%%%%%
%%%%%%%%%%%%%%%%%%%%%%%%%%%%%%%%%%%%%%%%%%%%%%%%%%
%%%%%%%%%%%%%%%%%%%%%%%%%%%%%%%%%%%%%%%%%%%%%%%%%%

\section{Hybrid architectures}
\label{sec:hybrid}

The bucket-brigade architecture allows one to perform queries in $O(\log N)$ time using $O(N)$ qubits. This allocation of resources represents one extreme; at the other extreme are architectures~\cite{babbush2018,park2019,deveras2020} that perform queries in $O(N \log N)$ time using $O(\log N)$ qubits. 
In fact, there exists a family of architectures that interpolate between these two extremes to leverage this space-time trade-off~\cite{low2018,berry2019,dimatteo2020,paler2020a}. 
We refer to these as \emph{hybrid architectures}, and in this section we study their noise resilience. 
We find that hybrid architectures can be imbued with a partial noise resilience when they employ the bucket-brigade QRAM as a subroutine. 
As a result, these hybrid bucket-brigade architectures can have significantly higher query fidelities than other architectures that require the same resources. 

One of the primary benefits of the bucket-brigade architecture is that queries can be performed in only $O(\log N)$ time. These fast query times are essential for algorithms that must rapidly load large classical data sets in order to claim exponential speedups over their classical counterparts, e.g., quantum machine learning algorithms~\cite{lloyd2013,wittek2014,adcock2015,biamonte2017,ciliberto2018}. However, the $O(N)$ hardware overhead that enables such fast queries is practically daunting, and not all algorithms require such fast queries in the first place. In algorithms that only require comparatively small data sets to be loaded, e.g.~simulating local Hamiltonians~\cite{berry2012,berry2015,berry2015b,babbush2018,low2019,bauer2020b}, slower query times can be sufficient. Circuits that use fewer qubits at the price of longer query times are better suited for such algorithms.

\begin{figure}[tb]
\centering{}\includegraphics[width=1.0\columnwidth]{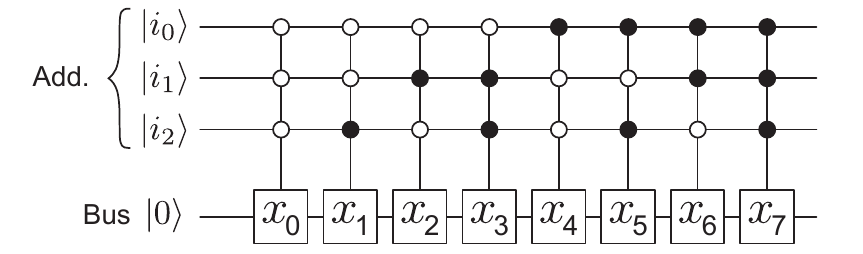}\caption{\label{fig:QROM} QROM circuit. The circuit implements operation (\ref{eq:QRAM_def}) by iterating over all $N$ possible states of the address register. The $j$-th gate flips the state of the bus qubit if the address register (Add.) is in state $\ket{j}$ and $x_j=1$, otherwise the gate acts trivially. 
}
\end{figure}

\Cref{fig:QROM} provides a straightforward example of such a circuit. To query a memory of size $N$, a sequence of $N$ multiply-controlled Toffoli gates is applied, where each gate has $\log N$ controls (the address qubits) and one target (the bus qubit). The circuit sequentially iterates over all $N$ possible addresses, flipping the bus qubit conditioned on the corresponding classical data. The circuit requires only $O(\log N)$ qubits, but it has depth $O(N \log N)$, since each multiply-controlled Toffoli gate can be performed in depth $O(\log N)$~\cite{saeedi2013}. 
Adopting the nomenclature introduced in Ref.~\cite{babbush2018}, we refer to such circuits as Quantum Read-Only Memory (QROM). We note that this circuit can be further optimized to reduce the depth, as shown in Ref.~\cite{babbush2018}. We omit these optimizations for simplicity, as they do not affect our main conclusions concerning the effects of noise.

\begin{figure*}[htbp]
\centering{}\includegraphics[width=1.75\columnwidth]{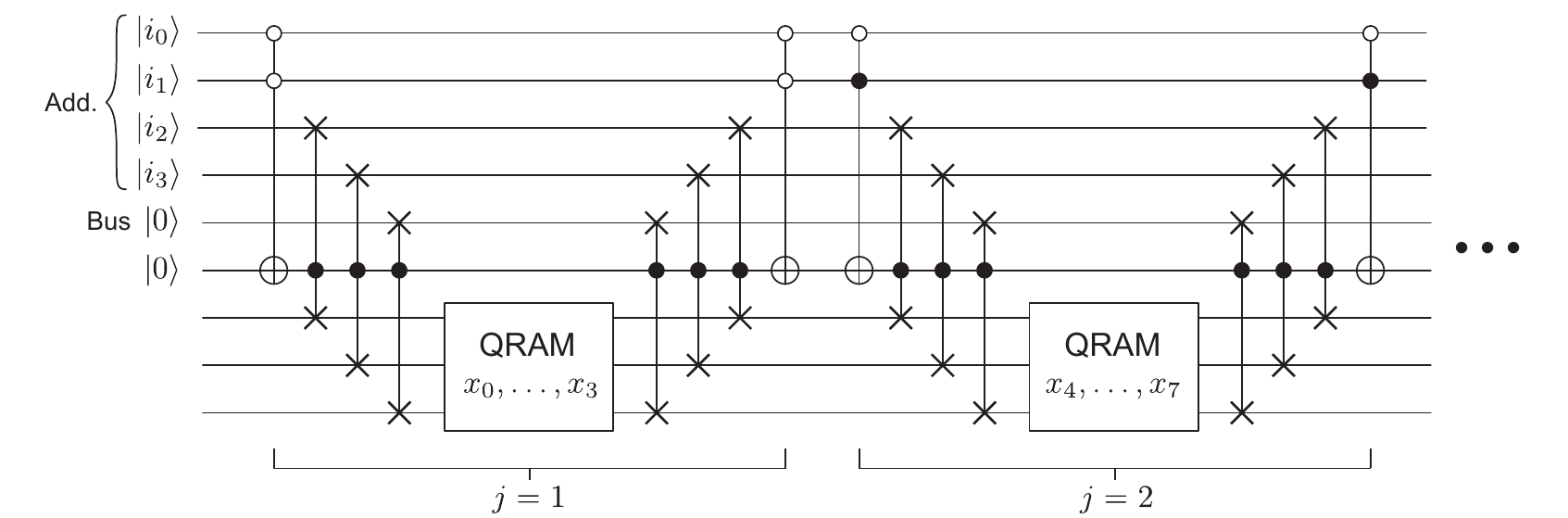}\caption{\label{fig:hybrid} Hybrid circuit. All $M=2^m$ possible states of the first $m$ address qubits are iterated over sequentially, as in QROM. Conditioned on these qubits, the remaining address qubits are used to query an $(N/M)$-cell classical memory via QRAM. In the circuit shown, $\log N =4$ and $m = 2$. The boxes labelled QRAM implement~(\ref{eq:QRAM_def}), using either the fanout or bucket-brigade architecture. At the $j$-th iteration ($j\in[1,M]$), the data elements $\{x_{[(j-1)N/M]},\ldots,x_{[j (N/M) -1]}\}$ are queried by the QRAM. Only the first two iterations are shown. The circuit depth is $O(M\log N)$, and the circuit uses $O(N/M+\log N)$ qubits, which includes the $O(N/M)$ ancillary qubits required by the QRAM (not shown).
The initial state of the QRAM can be arbitrary (see \Cref{sec:5}). }
\end{figure*}

More generally, circuits can be constructed that trade longer query times for fewer qubits by combining QROM and QRAM, as shown in~\Cref{fig:hybrid}. 
We introduce a tunable parameter $M \leq N$, defined to be a power of $2$. That is, $M = 2^m$, with $m$ an integer in the interval $[0, \log N]$. 
The idea is to divide the full classical memory into $M$ blocks, each with $N/M$ entries. 
These blocks are queried one-by-one using a QRAM of size $N/M$ concatenated with a QROM-like iteration scheme. 
The total hardware cost of the scheme is $O(\log N + N/M)$, comprising $O(\log N)$ qubits for the address and bus registers and $O(N/M)$ ancillary qubits for the QRAM. 
The total circuit depth is $O(M \log N)$ because each of the $M$ iterations in the circuit can be performed in depth $O(\log N)$. 
Therefore, by tuning the parameter $M$, one can interpolate between large-width, small-depth circuits like QRAM, and small-width, large-depth circuits like QROM. 
The hybrid circuit reduces to QRAM for $M = 1$, and QROM for $M = N$. 
We note that the circuits we introduce in \Cref{fig:hybrid} are very similar to those in Refs.~\cite{low2018,berry2019,dimatteo2020,paler2020a}. The main difference is that our circuits explicitly invoke QRAM as a subroutine, which makes the analysis of their noise resilience more straightforward.

Let us consider the effects of noise on these circuits. We first consider QROM, then turn to the hybrid circuits.
One can easily observe that QROM does not possess any intrinsic noise resilience. 
For example, when all memory elements are queried in equal superposition [$\alpha_i = 1/\sqrt{N}$ in \Cref{eq:QRAM_def}], a single dephasing error at any location in the QROM circuit reduces the query fidelity to 0. 
The effects of bit flips are similarly detrimental, assuming there is no contrived redundancy in the classical data. 
More generally, we can follow the approach of~\Cref{sec:3} and express the QROM query fidelity as $F = \sum_c p(c) F(c)$, where the error configuration $c$ specifies which Kraus operators are applied at each location in the circuit, and $F(c)$ is the final state fidelity of the address and bus registers given configuration $c$. 
In the case of QROM, only the error configuration with no errors is guaranteed to have unit or near-unit fidelity in general \footnote{Some other error configurations may have high fidelity for specific choices of the error channel, the initial address state, or the classical data, but we ignore this possibility to keep the analysis general and pessimistic.}. 
There are $O(N \log^2 N)$ possible error locations, so it follows that the QROM query infidelity scales as
\begin{equation}
1-F_{\text{QROM}} \sim \varepsilon N \log^2 N,
\end{equation}
to leading order.
Therefore, QROM is not noise resilient, since near-unit query fidelities generally require $\varepsilon  \ll 1/N$, neglecting logarithmic factors.

Similarly, the hybrid circuits do not exhibit noise resilience when the QRAM subroutines are implemented with the fanout architecture.
Recall from \Cref{sec:2} that the fanout architecture is not noise resilient; only the fanout's no-error configuration is guaranteed to have high fidelity in general. Because neither QROM nor the fanout QRAM are noise resilient, only the no-error configuration of the hybrid fanout circuit is guaranteed to have high fidelity. Since the number of possible error locations is $O(M \log N (\log N + N/M))$, the query fidelity scales as
\begin{align}
1-F_{\mathrm{hybrid,fanout}} &\sim \varepsilon (N \log N + M \log^2 N),
\end{align}
to leading order. 
Here again, error rates $\varepsilon \ll 1/N$ are required for near-unit query fidelity, neglecting logarithmic factors.

In contrast, the hybrid circuits do exhibit partial noise resilience when the QRAM subroutines are implemented with the bucket-brigade architecture. 
Because the bucket-brigade QRAM is resilient to noise, error configurations with errors occurring exclusively in the QRAM subroutines can still have high fidelities. 
We can obtain a lower bound on the query fidelity by neglecting all other configurations. Doing so allows us to bound the query fidelity by a product of two factors
\begin{equation}
F_{\mathrm{hybrid,BB}} \gtrsim (1-\varepsilon)^{O( M \log^2 N)} \times (1-\varepsilon)^{O(M \log N \log N/M)}
\end{equation}
The first factor is simply the probability that no errors occur outside the QRAM. 
The second factor is the expected fraction of error-free branches within the QRAM (each branch contains $\log N/M$ routers, and there are $T = O(M \log N)$ possible time steps at which errors may occur). We have related this expected fraction to $F_{\mathrm{hybrid-BB}}$ by the same argument as in \Cref{sec:3}. 
Thus, to leading order,
\begin{align}
1-F_{\mathrm{hybrid,BB}} &\lesssim \varepsilon M \log N (\log N + \log N/M), \\
&\sim \varepsilon M \log^2 N.
\end{align}
Note that we have not kept track of prefactors since we are only interested in how the infidelity scales; a strict upper bound could be rigorously derived following the approach of \Cref{sec:3}. 
Near-unit query fidelities only require error rates $\varepsilon  \ll 1/M$, neglecting logarithmic factors (cf.~the $\varepsilon \ll 1/N$ requirement for the other cases). 
Because $M\leq N$, the infidelity of the hybrid bucket-brigade architecture scales more favorably than both QROM and the hybrid fanout architecture. 
Of course, the extent of the scaling advantage depends on $M$. For example, if one chooses $M = \sqrt{N}$, so that the number of qubits and circuit depth are comparable, then the hybrid bucket-brigade architecture yields a quadratic improvement in the infidelity scaling. Note that we assume three-level routers above for simplicity; for two-level routers, one should replace $\log N/M \rightarrow \log^2 N/M$ in the above expressions, in accordance with the argument from~\Cref{sec:5}.

\section{Error-corrected QRAM}
\label{sec:6}

In this section, we show that the benefits of the bucket brigade scheme persist when quantum error correction is used. When the bucket-brigade QRAM is implemented using error-corrected routers and fault-tolerant routing operations~\cite{nielsen2000,gottesman2009}, the logical query infidelity scales only polylogarithmically with the memory size. Thus, error-corrected implementations of the bucket-brigade scheme can offer improved fidelity or reduced overhead relative to other implementations. In practice, these improvements may be tempered by the overhead associated with the fault-tolerant implementation of the routing operations, and we discuss the utility of the bucket-brigade architecture in light of such considerations. 

While we have shown that the query infidelity of the bucket-brigade scheme scales favorably with the memory size, strategies to further suppress the infidelity are desirable, and quantum error correction provides one possible approach. Indeed, error correction may be required in cases where the physical error rate cannot be made sufficiently small, or when many queries must be performed in sequence. For example, Ref.~\cite{arunachalam2015} argued that error correction is likely to be needed for any algorithm that requires a number of QRAM queries that scales superpolynomially in $\log N$, e.g., Grover's algorithm~\cite{grover1996}. 

It is thus natural to ask whether an error-corrected bucket brigade QRAM offers any advantages over other architectures. Indeed, this question was previously considered in Ref.~\cite{arunachalam2015}, where the authors argue in the negative. Their argument is based on the canonical attribution~\cite{giovannetti2008,giovannetti2008a,hong2012,arunachalam2015,ciliberto2018,adcock2015} of the bucket brigade's noise resilience to the limited number of active routers. Error-corrected routers must be considered active, they argue, and so the number of active routers is the same in both the fanout and bucket brigade schemes. Hence, the bucket brigade scheme was not believed to provide any advantage if error correction were used. 

As we have shown, however, the noise resilience of the bucket brigade scheme is not a function of the number of active routers, but rather a function of the limited entanglement among the routers. As a direct corollary of this result, we find that, in fact, the benefits of the bucket brigade scheme do persist when error correction is used. The proof from~\Cref{sec:3} is agnostic to whether the routers are composed of uncorrected physical qubits or error-corrected logical qubits, provided that uncorrectable logical errors occur independently with some probability $\varepsilon_L$ (which can be guaranteed by implementing the routing operations fault tolerantly). 
Physical errors occurring with probability $\varepsilon$ can simply be replaced by logical errors occurring with probability $\varepsilon_L$, and one obtains the corresponding bound
\begin{equation}
\label{eq:logical_scaling}
1-F_L \leq  4\varepsilon_L T_L \log N,
\end{equation}
where $F_L$ is the query fidelity of the logical QRAM circuit, and $T_L$ is the circuit depth. Thus, when implemented fault-tolerantly, the logical bucket-brigade circuits possess an intrinsic resilience to logical errors, in that the logical infidelity scales only polylogarithmically with the the size of the memory (This scaling assumes $T_L=O(\log N)$; see further discussion at the end of this section). 

To provide further exposition, we give a concrete example of an error-corrected quantum router. Consider a quantum error correcting-code, with logical codewords $\ket{0_L}$ and $\ket{1_L}$ satisfying the Knill-Laflamme conditions~\cite{knill1997,nielsen2000}, 
\begin{equation}
\label{eq:knill_laflamme}
P K_i^\dagger K_j P = h_{ij} P,
\end{equation}
where $P$ is the projector onto the code space, the $\{ K_i\}$ are the set of correctable errors, and $h$ is a Hermitian matrix. 
A logical two-level quantum router then constitutes a single logical qubit (similarly, a logical three-level quantum router can be constructed from a pair of logical qubits, for example). 
%can then be constructed from two logical qubits. For example, we define
% \begin{align}
%     \ket{\overline{W}} &\equiv \ket{0_L}\otimes\ket{0_L}, \\
%     \ket{\overline{0}} &\equiv \ket{1_L}\otimes\ket{0_L}, \\
%     \ket{\overline{1}} &\equiv \ket{1_L}\otimes\ket{1_L},
% \end{align}
% and we refer to the collection of states $\{\ket{\overline{W}},\ket{\overline{0}},\ket{\overline{1}}\}$ as a logical router. 
%Errors in the physical qubits comprising the logical router are corrected by independently correcting the two logical qubits.
Crucially, the logical routers comprising the QRAM can be corrected without revealing any information about which memory elements are being accessed. This is because the conditions~(\ref{eq:knill_laflamme}) guarantee that errors can be corrected without revealing any information about the encoded state. Even when the logical router is in a superposition of different states, or entangled with other routers, syndrome measurements do not reveal information about the router state.
Note that the conditions~(\ref{eq:knill_laflamme}) also guarantee that information is not leaked to the environment; the states $\ket{0_L}$ and $\ket{1_L}$ necessarily have equal probability of suffering errors. 

Because of the favorable logical error scaling~\Cref{eq:logical_scaling}, error-corrected implementations of the bucket-brigade scheme can offer improved fidelity or reduced overhead relative to other implementations. For instance, if the same error-correcting code is used in fault-tolerant implementations of the bucket-brigade and fanout QRAMs, the logical infidelity of bucket brigade QRAM will be lower than the logical infidelity of the fanout QRAM by a factor of $\sim 1/N$ in general.
Alternatively, if a given application requires that QRAM have a logical infidelity below some threshold, the error-correction overhead required to realize such high-fidelity queries can be significantly smaller for the bucket-brigade scheme relative to the fanout scheme. Indeed, even if the reduction in error-correction overhead is fairly small for each router, the total overhead reduction considering all $N$ routers can be significant. Such reductions could be of significant practical benefit.
{
For context, we note that detailed overhead estimates for fault-tolerant QRAM using the surface code were made in Ref.~\cite{dimatteo2020}; these overheads can potentially be improved by exploiting the bucket-brigade's noise resilience.
}

We conclude this section with an important caveat concerning fault-tolerant QRAM.

\section{Discussion}
\label{sec:7}

We have shown that the bucket-brigade QRAM architecture possesses a remarkable resilience to noise. Even when all $O(N)$ components comprising the QRAM are subject to arbitrary error channels, the query infidelity scales only polylogarithmically with the memory size.  
As a result, the bucket-brigade architecture can be used to perform high-fidelity queries of large memories without the need for quantum error correction, provided physical error rates are low.
Importantly, we prove that this noise resilience holds for \emph{arbitrary} error channels, demonstrating that a noise-resilient QRAM can be implemented with realistically noisy devices. 

In the near-term, this noise resilience could facilitate experimental demonstrations and benchmarking of numerous quantum algorithms. 
We are presently in the Noisy, Intermediate-Scale Quantum (NISQ) era~\cite{preskill2018}, when making more qubits is easier than making better qubits. The same is likely to be true even in the era of early fault-tolerance. 
In these eras, the bucket-brigade architecture---with its larger overhead and noise resilience---could actually prove to be more practical than alternatives like QROM (see \Cref{sec:hybrid}) that have a lower overhead but are less tolerant to noise.
The bucket-brigade architecture thus more readily enables small-scale, near-term implementations of algorithms, and important practical insights are likely to be gained from such demonstrations.  
{Schemes to further suppress the query fidelity without resorting to full error correction~\cite{leeinpreparation} could prove useful in this effort.}

\begin{table*}[t]
    \centering
    \def\arraystretch{1.5}
    \begin{tabular}{ |c|c|c|c|c|  }
\hline
$N$ & $Q$ & Applicable architectures & QEC required?  & Paradigmatic example  \\
\hhline{|=|=|=|=|=|}
$\exp(n)$ & $\exp(n)$ &  QRAM & Yes  & Searching an unstructured database~\cite{grover1996} \\
\hline
$\exp(n)$ & $\text{poly}(n)$ & QRAM & Maybe not  & 
 \makecell{Solving linear systems of equations~\cite{harrow2009} \\ (sparse, well-conditioned systems)}\\%Solving linear systems of equations~\cite{harrow2009} \\
%\hline
%$\text{poly}(n)$ & $\exp(n)$ & QRAM, QROM, Hybrid & Yes  & ?  \\
\hline
$\text{poly}(n)$ & $\text{poly}(n)$ & QRAM, QROM, Hybrid & Maybe not  & Simulating local Hamiltonians~\cite{childs2018}  \\
\hline
\end{tabular}
    \caption{Algorithm categorization. Algorithms are sorted based on how the size of the classical memory, $N$, and the number of queries, $Q$, scale with the number of qubits, $n$. When $N = \exp(n) $, QRAM is the only suitable architecture, assuming $\text{poly}(n)$ query times are required. When $Q = \text{poly}(n)$ quantum error correction may not be required, depending on the physical error rates. For the examples in the last two rows, $Q$ also depends on the particular algorithm used and the desired precision; we assume these are chosen such that $Q = \text{poly(n)}$. We omit the case of $N=\text{poly}(n)$ and $Q = \exp(n)$, for which the query complexity is exponential in the problem size.
  }
    \label{tab:algorithm_categories}
\end{table*}

In the long-term, this noise resilience may prove useful in facilitating speedups for certain quantum algorithms, but it is important that the required resources be carefully assessed before a speedup via QRAM is claimed.  
Consider an oracle-based algorithm that requires $n$ qubits (not including ancillary qubits needed to implement the oracle).
As we show in \Cref{tab:algorithm_categories}, such algorithms can be conveniently classified according to how the size of the classical memory being queried, $N$, and the total number of queries, $Q$, scale with $n$. 
Assuming $\text{poly}(n)$ query times are required, the memory size $N$ dictates whether QRAM (as opposed to QROM or a hybrid architecture) is required to implement the oracle. The number of queries $Q$ dictates whether error correction is necessarily required~\cite{arunachalam2015}.
The noise resilience of the bucket-brigade has the biggest potential impact in case of $N= \exp(n)$ and $Q = \text{poly}(n)$. In this case, QRAM is required, and the noise-resilience of the bucket-brigade architecture, together with the comparatively small number of queries, allows for the possibility that the QRAM could be implemented without error correction. 
Of course, the noise resilience can also be advantageous in the other cases, where hybrid architectures may be employed (\Cref{sec:hybrid}) or when error correction is used (\Cref{sec:6}).

Finally, it is worth emphasizing that the results in this paper constitute general statements about the bucket-brigade architecture, independent of its application to particular algorithms. In fact, the architecture may prove useful in applications other than facilitating algorithmic speedups. For example, Ref.~\cite{giovannetti2008b} employs the bucket-brigade architecture in a quantum cryptographic protocol. The architecture may similarly prove useful for quantum communication or metrology. Exploring applications of the bucket-brigade architecture---and the utility of its noise resilience---in these other contexts represents an interesting direction for future research. In particular, applications involving quantum queries of quantum data remain largely unexplored. 

\section{Acknowledgements}
We thank Shruti Puri and Xiaodi Wu helpful discussions. We acknowledge the Yale Center for Research Computing for use of its high-performance computing clusters. CTH~acknowledges support from the NSF Graduate Research Fellowship Program (DGE1752134). We acknowledge support from the ARO (W911NF-18-1-0020, W911NF-18-1-0212), ARO MURI (W911NF-16-1-0349), AFOSR MURI (FA9550-19-1-0399), DOE (DE-SC0019406), NSF (EFMA-1640959, OMA-1936118, EEC-1941583), NTT Research, and the Packard Foundation (2013-39273).

\normalbaselines
\bibliography{QRAM_bibliography_final2}

\newpage
\appendix

\section{Bucket-brigade QRAM circuit}
\label{appendix:circuit}

\begin{figure*}
\centering{}\includegraphics[width=2.0\columnwidth]{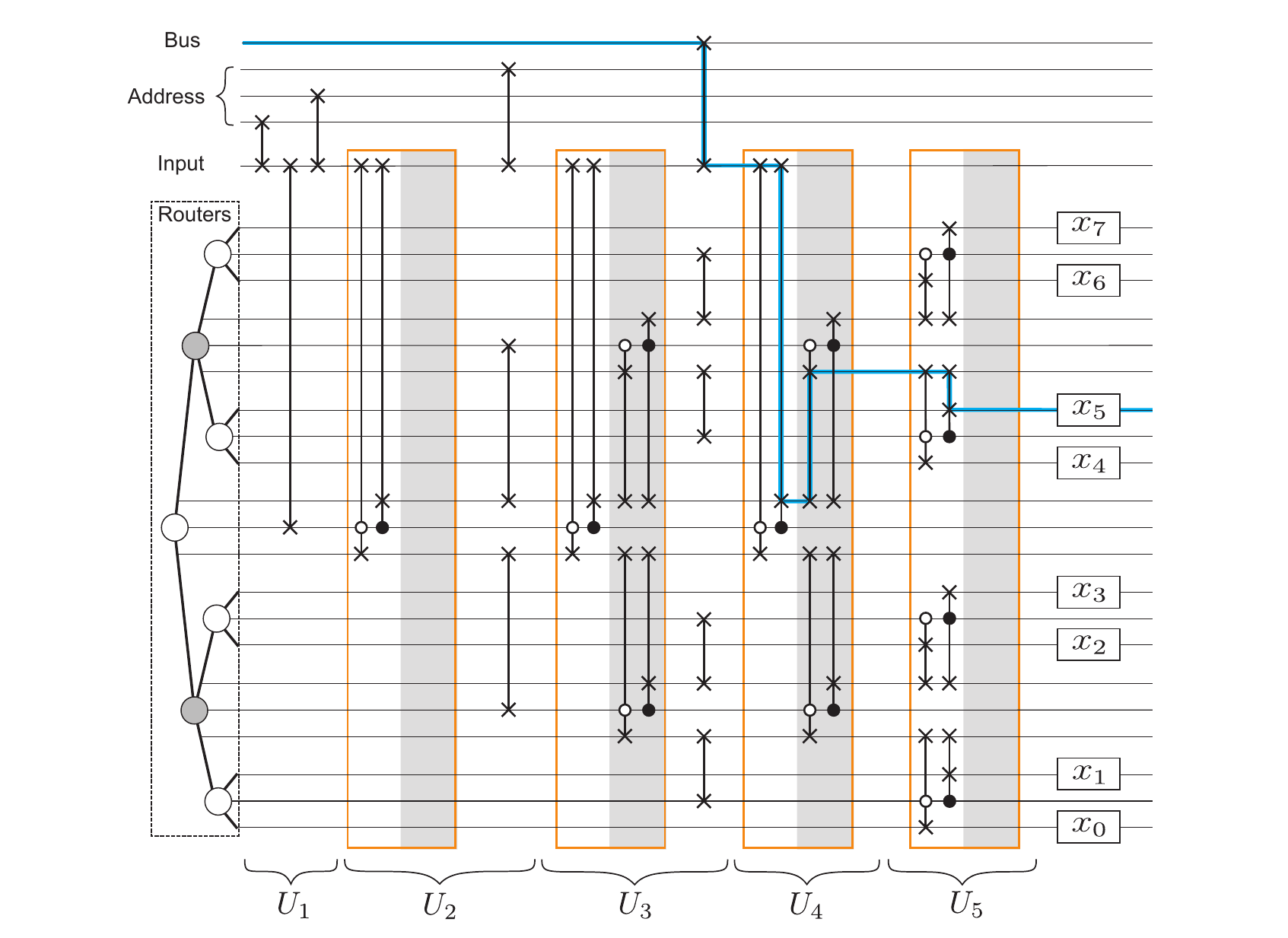}\caption{\label{figS1} Bucket-brigade QRAM circuit for $N=8$. The bus and address register are indicated by rails at the top of the diagram, and the routers are indicated by the rails below. The ``input'' rail is an extra ancilla, included only to simplify the circuit diagram.
For each router shown on the left, there are three rails: one for the router's internal state, and two for the router's two output modes. 
All rails represent qubits (qutrits) when the variant of the bucket-brigade architecture with two-level (three-level) routers is used; the circuit is the same in either case.
To simplify the circuit we use a shorthand notation for the routing operation (top right).
Each orange box contains routing operations for a subset of routers in the tree, with routing operations for the even levels first (white background), followed by the odd levels (gray background). 
Though only a subset of the possible routing operations are actually applied at each time step, in principle all routing operations at even (odd) levels can be applied in parallel. 
The first part of the circuit, $U_3 U_2 U_1$, routes the addresses into the tree, as in \Cref{fig1}(d). The next part, $U_5 U_4$, routes the bus to the appropriate memory cell. The path of the bus is highlighted in blue for the case where the three address qubits are initialized to $\ket{010}$. The last layer of gates copy data into the state of the bus (see \Cref{appendix:copying_data}). To route the addresses and bus out of the tree and complete the query, the inverse operation, $(U_5 U_4 U_3 U_2 U_1)^\dagger$, must subsequently be applied (not shown).
}
\end{figure*}

\Cref{figS1} shows a circuit diagram for the bucket-brigade QRAM in the case of an $N=8$ cell memory. In contrast to the $T=O(\log^2 N)$ depth circuits described implicitly in Refs.~\cite{giovannetti2008,giovannetti2008a,arunachalam2015}, the circuit depth here is only $T=O(\log N)$.  The quadratic speedup is due to additional parallelization of the routing operations that we now describe. 

In Refs.~\cite{giovannetti2008,giovannetti2008a,arunachalam2015}, address qubits are routed into the tree one at a time; the $(\ell+1)$-th address qubit is only injected into the tree once the $\ell$-th address qubit has taken its position at level $\ell$ of the tree. If each routing operation takes one time step, then one waits $\ell$ time steps between the injection of the $\ell$-th and $(\ell +1)$-th address qubits.  The total circuit depth is obtained by summing the number of time steps that it takes for each address to reach the corresponding level,
\begin{equation}
    T \sim \sum_{\ell = 0}^{\log N - 1} \ell = O(\log^2 N).
\end{equation}
However, it is not necessary to wait for an address qubit to reach its destination before subsequent address qubits are sent into the tree, and this realization enables the circuit depth to be reduced to $O(\log N)$. Notice that the routing operations for routers located at even levels of the tree act on mutually disjoint qubits and hence mutually commute (the same is true for the odd levels). Thus, all routing operations at either even or odd levels can be performed in parallel. In practice, then, once an address qubit has reached level $\ell = 2$, the next address qubit can be sent into the tree at level $\ell = 0$, and the two can be routed in parallel. This way, the wait time between the injection of subsequent address qubits into the tree is constant (cf.~the $O(\ell)$ wait time above). Exploiting this additional parallelism, the total circuit depth is reduced to $T=O(\log N)$. 

The circuit diagram in \Cref{figS1} illustrates how such $T=O(\log N)$ bucket-brigade QRAM circuits are structured. The circuit is a sequence of $T$ constant-depth circuits $U_t$. During each block $U_t$, a subset of the possible routing operations is performed (orange boxes in the figure), with routing operations at even levels performed before routing operations at odd levels. Importantly, because the operations at even levels can be performed in parallel (sim.~for odd levels), each orange box constitutes only two layers of parallel gates.
In principle, new address qubit can then be injected into the tree at the beginning of each orange box, with routing performed in parallel whenever multiple address qubits are being routed through the tree. The case of $N=8$ shown in the figure is too small to demonstrate this parallelism, but to see how the parallelism would manifest at larger memory sizes, notice that it would be possible to inject another qubit into the tree during $U_5$ and route this qubit (at level $\ell = 0$) in parallel with one at level $\ell = 2$.

\section{Copying data to the bus}

\label{appendix:copying_data}

In this section, we explicitly describe how classical data can be copied into the state of the bus. Slightly different procedures are required depending on whether the QRAM is implemented with two-level or three-level systems, and whether the QRAM is initialized in a known state or in some arbitrary state (see \Cref{sec:5}).

We begin with the case where the QRAM is implemented with two-level routers, as described in \Cref{sec:5}. 
Each router's incident and output modes are also taken to be physical two-level systems. 
All routers and their respective modes are initialized to $\ket{0}$.
For reasons that will become apparent shortly, we suppose that the bus qubit is initialized to $\ket{+} \equiv (\ket{0} + \ket{1})/\sqrt{2}$ prior to the query. During the query, this bus qubit is routed down the tree, to an output mode of some router at the bottom level.  At this point, classical data is encoded into the state of the bus qubit by applying classically controlled $Z$ gates, as illustrated in \Cref{fig:copy}(a). If the memory element being queried is 1, a $Z$ gate is applied, and the state of the bus is flipped from $\ket{+}$ to $\ket{-}\equiv(\ket{0}-\ket{1})/\sqrt{2}$. If the memory element queried is 0, no $Z$ gate is applied, and the bus remains in $\ket{+}$. In this way, the classical bit is encoded in the $\ket{\pm}$ basis of the bus qubit. 
Note, however, that because the location of the bus is not known, classically controlled $Z$ gates must be applied to the output modes of all routers at the bottom level of the tree.

\begin{figure}
\centering{}\includegraphics[width=1.0\columnwidth]{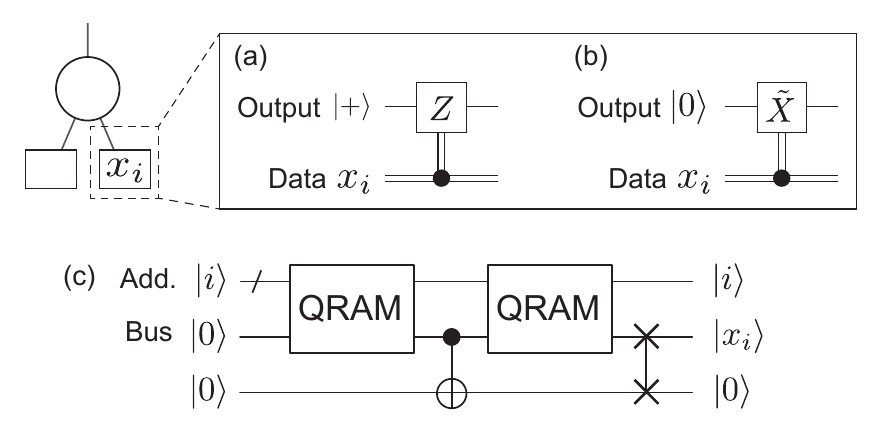}\caption{\label{fig:copy} Circuits for copying classical data. (a) Two-level circuit. The bus qubit is encoded within a physical two-level system and initialized in $\ket{+}$. A $Z$ gate flips the bus to $\ket{-}$ conditioned on the classical data. (b) Three-level circuit. The bus qubit is encoded within a two-level subspace of a physical three-level system and initialized in $\ket{0}$. The $\tilde{X}$ gate (see text) flips the bus to $\ket{1}$ conditioned on the classical data. (c) Query circuit for QRAM initialized in an arbitrary state. The circuits assumes three-level routers, so the bus is initialized in $\ket{0}$ and circuit (b) is employed within each QRAM block to copy data to the bus. An analogous circuit can be constructed for two-level routers. The ancillary qubits comprising the QRAM's routers (not shown) can be initialized in an arbitrary state.
}
\end{figure}

This data copying operation has a crucial property, which we call \emph{no extra copying}: in the absence of errors, the copying operation acts trivially on all modes that do not contain the bus qubit.
In the above case, all modes that do not contain the bus are in $\ket{0}$, so they are unaffected by the $Z$ gates, hence why we use the $\ket{\pm}$ basis for the bus~\cite{hann2019}.
The no extra copying property is crucial because it guarantees that the final state of the tree is the same across all good (error-free) branches, as required by the arguments in the main text. 
Were this property not to hold, the final state of the tree would depend on which element was queried, so the bus would remain entangled with the routers after the query, even in the absence of errors. 

Now let us consider the case where the QRAM is implemented with three-level routers, as described in \Cref{sec:2,sec:3}.
Each router's incident and output modes are taken to be physical three-level systems, whose basis states we also label as $\ket{0}$, $\ket{1}$, and $\ket{W}$.
The address and bus qubits are encoded within the $\ket{0,1}$ subspace of such three-level systems.
Prior to the query, all routers, as well as their incident and output modes, are initialized to $\ket{W}$, and the bus is initialized to $\ket{0}$. During the query, the bus is routed to an output mode of some router at the bottom level of the tree. Data is copied into the bus by applying classically controlled $\tilde X$ gates to the output modes [\Cref{fig:copy}(b)], where
\begin{equation}
    \tilde X = \ket{1}\bra{0} + \ket{0}\bra{1} + \ket{W}\bra{W}.
\end{equation}
If the memory element being queried is 1, the $\tilde X$ gate is applied, and the state of the bus is flipped from $\ket{0}$ to $\ket{1}$. If the memory element queried is 0, no $\tilde X$ gate is applied, and the bus remains in $\ket{0}$. In this way, the classical bit is encoded in the $\ket{0,1}$ basis of the bus qubit (one could also choose to encode the information in the $\ket{\pm}$ basis by constructing an analogous $\tilde Z$ gate).
Here again, the classically controlled gates must be applied to the output modes of all routers at the bottom of the tree. This operation satisfies the no extra copying property because, in the absence of errors, all modes not containing the bus are in $\ket{W}$, on which $\tilde X$ acts trivially.

In order to enforce the no extra copying property, both of the above data copying operations rely on the fact that the routers, as well as their input and output modes, are initialized to some known state. When the QRAM is initialized in an arbitrary state (see \Cref{sec:5}), however, additional care must be taken to ensure this property still holds. 
The challenge is that the mode that actually contains the bus must somehow be distinguished from all the other modes, which may have been initialized in the same state as the bus. 
This problem is solved by the circuit in \Cref{fig:copy}(c).
The QRAM is queried twice, and the no extra copying property is guaranteed by the fact that the entire QRAM unitary operation is idempotent. In particular, even if the process of copying data during the first query acts non-trivially on modes not containing the bus, these modes are always reset to their initial states by the second query. 
In fact, even the bus is reset to its initial state by the second query. Thus, the information stored in the bus is copied to an ancillary qubit in between the two queries, then swapped back into the bus after the second query. We emphasize that the query fidelity of this circuit scales favorably, which can be shown by simply replacing $T\rightarrow 2T$ in the scaling argument from \Cref{sec:5} to account for the fact that the QRAM is called twice.

As an aside, let us distinguish between our observation that QRAM is resilient to noise even when initialized in an arbitrary state (\Cref{sec:5}), and the observation of Refs.~\cite{low2018,berry2019} that the ancillary qubits used to perform a query can be ``dirty.'' The latter states that circuits can be designed such that, \emph{in the absence of errors}, any ancillary qubits used during the query are returned to their initial state after the query, regardless of what the initial state was (note the circuit in \Cref{fig:copy}(c) has this property).  
In contrast, our observation concerns what happens when errors occur during the query: the query infidelity of the circuit \Cref{fig:copy}(c) scales favorably even when the QRAM is initialized in an arbitrary state.

\section{Error channel Kraus decompositions}
\label{appendix:channels}
In this Appendix, we give the Kraus decompositions for the channels used in our simulations. We specify a generic channel $\mathcal{E}$ as via a list of its Kraus operators as
\begin{equation}
    \mathcal E  = \left\{K_0, K_1, K_2, \ldots \right\}.
\end{equation}

\subsection{Qubit error channels}
Let $X,Y,Z$ denote the Pauli matrices. The decompositions of the qubit error channels are
\begin{align}
\text{Depolarizing} &= \left\{
\sqrt{1-\varepsilon} I, 
\sqrt{\frac{\varepsilon}{3}} X, \sqrt{\frac{\varepsilon}{3}} Y,
\sqrt{\frac{\varepsilon}{3}} Z
\right\} \\
\text{Bit-flip} &= \left\{
\sqrt{1-\varepsilon} I, 
\sqrt{\varepsilon} X
\right\} \\
\text{Dephasing} &= \left\{
\sqrt{1-\varepsilon} I, 
\sqrt{\varepsilon} Z
\right\} \\
\text{Damping} &= \left\{ 
\ket{0}\bra{0} +\sqrt{1-\varepsilon}\ket{1}\bra{1}, 
\sqrt{\varepsilon} \ket{0}\bra{1}
\right\} \\
\text{Heating} &= \left\{
\ket{1}\bra{1} +\sqrt{1-\varepsilon}\ket{0}\bra{0}, 
\sqrt{\varepsilon} \ket{1}\bra{0}
\right\} 
\end{align}

\subsection{Qutrit error channels}
We follow the definitions for qutrit depolarizing, bit-flip, and dephasing channels given in Refs.~\cite{ramzan2012}, \cite{arunachalam2015}, and \cite{wei2013}, respectively. Define the operators
\begin{equation}
    A_1 = \begin{pmatrix}
        0 & 1 & 0\\
        0 & 0 & 1\\
        1 & 0 & 0
        \end{pmatrix},\, 
     A_2 = \begin{pmatrix}
        1 & 0 & 0\\
        0 & \omega & 0\\
        0 & 0 & \omega^2
        \end{pmatrix},
    % A_3 = \begin{pmatrix}
    %     1 & 0 & 0\\
    %     0 & \omega^2 & 0\\
    %     0 & 0 & \omega
    %     \end{pmatrix} 
\end{equation}
where the matrices are written in the $\{\ket{W},\ket{0},\ket{1} \}$ basis, and $\omega = e^{i 2\pi/3}$. 
The decompositions of the qutrit error channels are
\begin{widetext}
\begin{align}
\text{Depolarizing} &= \bigg\{
\sqrt{1-\varepsilon} I, 
\sqrt{\frac{\varepsilon}{8}} A_1, \sqrt{\frac{\varepsilon}{8}} A_2,
\sqrt{\frac{\varepsilon}{8}} A_1^2, 
\sqrt{\frac{\varepsilon}{8}} A_2^2,
\sqrt{\frac{\varepsilon}{8}} A_1A_2, 
\sqrt{\frac{\varepsilon}{8}} A_1^2A_2, 
\sqrt{\frac{\varepsilon}{8}} A_1A_2^2,
\sqrt{\frac{\varepsilon}{8}} A_1^2A_2^2
\bigg\} \\
\text{Bit-flip} &= \left\{
\sqrt{1-\varepsilon} I, 
\sqrt{\varepsilon} \left(\ket{0}\bra{1} + \ket{1}\bra{0}\right)
\right\} \\
\text{Dephasing} &= \left\{
\sqrt{1-\varepsilon} I, 
\sqrt{\frac{\varepsilon}{2}} A_2,
\sqrt{\frac{\varepsilon}{2}} A_2^2
\right\} \\
\text{Damping} &= \bigg\{ 
\ket{W}\bra{W} +\sqrt{1-\varepsilon}\left(\ket{0}\bra{0}+\ket{1}\bra{1}\right), 
\sqrt{\varepsilon} \ket{W}\bra{0},
\sqrt{\varepsilon} \ket{W}\bra{1}
\bigg\} \\
\text{Heating} &= \bigg\{ 
\ket{0}\bra{0}+\ket{1}\bra{1}
 +\sqrt{1-\varepsilon}\ket{W}\bra{W},  
\sqrt{\frac{\varepsilon}{2}} \ket{0}\bra{W},
\sqrt{\frac{\varepsilon}{2}} \ket{1}\bra{W}
\bigg\}.
\end{align}
\end{widetext}

\section{Proof of noise resilience for arbitrary error channels}
\label{appendix:proof}
In this Appendix, we prove that, for arbitrary error channels,
the QRAM query infidelity satisfies the bound 
\begin{equation}
    1-F \leq A  \varepsilon T \log N,
\end{equation}
where $A$ is a constant of order 1 (defined below). 
We begin by defining the error model and introducing some convenient notation.

%%%%%%%%%%%%%%%%%%%%%%%%%%%%%%%%%%%%%%%%%%%%%%%%
\subsection{Error model}
\label{sec:proof_error_model}
%%%%%%%%%%%%%%%%%%%%%%%%%%%%%%%%%%%%%%%%%%%%%%%%

We suppose that at each time step, every router in the QRAM is subject to an error channel $\mathcal E$ of the form 
\begin{equation}
    \rho \rightarrow \mathcal E (\rho) = \sum_m K_m \rho K_m^\dagger.
\end{equation}
One could also consider situations where different routers are subject to different error channels; the proof straightforwardly extends to such situations.
For the moment, we make two additional assumptions that simplify the proof. First, we restrict our attention to channels with Kraus rank two, i.e.~channels that can be expressed with only two non-zero Kraus operators, $K_0$ and $K_1$. Second, we assume that each router is subjected to the error channel $\mathcal E$ only once, at the time step $t^*$ after all addresses have been routed into place and immediately before the bus enters the tree (this time step is the one illustrated in \Cref{fig1,fig1pt5}). We relax both of these assumptions later on.

Let us define the error rate, $\varepsilon$. For a generic state $\ket{\psi}$, we may write
\begin{equation}
    K_0\ket{\psi} = a_\psi \ket{\psi} + b_\psi \ket{\psi^\perp}
\end{equation}
where $\braket{\psi|\psi^\perp} = 0$, and $a_\psi$,$b_\psi$ are complex numbers. We define $\varepsilon$ as the smallest positive real number such that, for any $\ket{\psi}$, we have
\begin{align}
    \label{eq:b_psi}
    |b_{\psi}|&\leq \sqrt{\varepsilon},
\end{align}
and for any collection of arbitrary states $\ket{\psi_i}$, we also have
\begin{align}
    \label{eq:a_psi}
    \Re\left[\prod_{i=1}^{m} a_{\psi_i}\right] &\geq (1-\varepsilon)^{m/2}, \text{ for any $m\leq \log N$}.
\end{align}
This definition is unconventional, so let us emphasize that $\varepsilon$ can be understood as quantifying the distance between $\mathcal E$ and the identity channel. Specifically, for any given $\varepsilon_0 \in [0,1]$, there always exists some $\delta\in[0,1]$ such that, whenever $|\mathcal E - I|<\delta$ according to some distance metric, then $\varepsilon < \varepsilon_0$.

We adopt this definition of $\varepsilon$ because it turns out to be very convenient for the proof. For example, it follows that
\begin{align}
    \braket{\psi|K_0^\dagger K_0|\psi} &\geq 1-\varepsilon \\
    \braket{\psi|K_1^\dagger K_1|\psi} &\leq \varepsilon.
\end{align}
and, for a generic product state, 
$\ket{\xi_m} = \bigotimes_{i = 1}^{m} \ket{\psi_i} $,
\begin{align}
\label{eq:ideal_components}
    K_0^{\otimes m} \ket{\xi_m}  = a_{\xi_m} \ket{\xi_m} + b_{\xi_m} \ket{\xi_m^\perp},
\end{align}
where it follows from \Cref{eq:b_psi,eq:a_psi} that
\begin{align}
    \label{eq:ideal_components_a}
    \Re(a_{\xi_m} )&\geq (1-\varepsilon)^{m/2} \\
    \label{eq:ideal_components_b}
    |b_{\xi_m}| &\leq \varepsilon^{m/2}
\end{align}
for $m\leq \log N$. 

%%%%%%%%%%%%%%%%%%%%%%%%%%%%%%%%%%%%%%%%%%%%%%%%
\subsection{Notation}
%%%%%%%%%%%%%%%%%%%%%%%%%%%%%%%%%%%%%%%%%%%%%%%%
We now define some convenient notation.
As in the main text, we use the shorthand
\begin{equation}
    K_c \equiv \bigotimes_{r=1}^{N-1} K_{c(r)} 
\end{equation}
to denote the composite Kraus operator acting on all routers in the tree. Here, $K_{c(r)}\in\{K_0,K_1\}$ denotes the Kraus operator applied to router $r$ at time $t^*$ as specified by the error configuration $c$. 
For the $i$-th branch of the tree, we also define the operator $ K_{c \not\in i}$ as that obtained by taking $K_{c}$ and replacing $K_{c(r)}\rightarrow I$ for any router $r$ in branch $i$:
\begin{equation}
    K_{c \not\in i} \equiv \bigotimes_{r=1}^{N-1} 
    \begin{cases}
        I, & r\in i,  \\
        K_{c(r)}, & \text{otherwise.} \\ 
    \end{cases}
\end{equation}
Similarly, for a pair of branches, $i$ and $j$, we analogously define
\begin{equation}
    \label{eq:Kc_not_in_ij}
    K_{c \not\in i,j} \equiv \bigotimes_{r=1}^{N-1} 
    \begin{cases}
        I, & r\in i \cup j,  \\
        K_{c(r)}, & \text{otherwise.} \\ 
    \end{cases}
\end{equation}
where $r\in i\cup j$ denotes the set of routers in either branch $i$ or $j$.

Additionally, we define $\ket{R_i}$ to be the ideal state of the routers at time $t^*$ assuming the $i$-th memory element is queried. Specifically, $\ket{R_i}$ is a computational basis state for which all routers $r\not\in i$ are in the state $\ket{W}$, and routers $r\in i$ are in either $\ket{0}$ or $\ket{1}$ and carve out a path to the memory element $i$. As examples, $\ket{R_{101}}$ is illustrated in \Cref{fig1}(d), and the superposition
\begin{equation*}
    \ket{R_{000}}+\ket{R_{010}}+\ket{R_{011}}+\ket{R_{101}}+\ket{R_{110}}  
\end{equation*}
is illustrated in \Cref{fig1pt5}. 

Finally, throughout the proof it will be convenient to work with unnormalized quantum states. We distinguish between normalized states $\ket{\psi}$ and unnormalized states $\ket{\overline\psi}$ with an overbar.

\begin{widetext}

%%%%%%%%%%%%%%%%%%%%%%%%%%%%%%%%%%%%%%%%%%%%%%%%
\subsection{Proof}
%%%%%%%%%%%%%%%%%%%%%%%%%%%%%%%%%%%%%%%%%%%%%%%%
Our proof of QRAM's noise resilience for generic error channels follows the same outline as the proof for mixed-unitary error channels given in the main text.
We again begin by defining the final state of the system $\Omega$ as an incoherent mixture over final states for different error configurations $c$,
\begin{equation}
    \Omega = \sum_c \overline\Omega(c)
\end{equation}
in analogy to \Cref{eq:final_mixture}. Here, the final system state given error configuration $c$ is $\overline\Omega(c) = \ket{\overline\Omega(c)}\bra{\overline\Omega(c)}$, where $\ket{\overline\Omega(c)}$ is the unnormalized state
\begin{equation}
     \ket{\overline \Omega(c)}= V\left[
\ket{0}^A \ket{0}^B K_c \left(\sum_i\alpha_i\ket{R_i}^R\right) 
\right],
\end{equation}
where the superscripts $A,B,R$ denote the addresses, bus, and routers, respectively. 
In accordance with the error model described in \Cref{sec:proof_error_model}, 
the quantity in square brackets is the ideal state of the system at time $t^*$ with errors $K_c$ subsequently applied to the routers. The unitary $V$ comprises all operations performed after time $t^*$. Namely, $V$ routes the bus to memory, copies data into the bus, then routes the bus and all address qubits out of the tree. Note that $K_c$ is not unitary, and the states $\ket{\overline \Omega(c)}$ are not normalized. The probability $p(c)$ of error configuration $c$ occurring is 
\begin{equation}
    \label{eq:pc}
    p(c) = \mathrm{Tr}[\overline \Omega(c)].
\end{equation}

The query fidelity is given by 
\begin{equation}
    F = \sum_c \overline F(c)
\end{equation}
where
\begin{equation}
\overline{F}(c) = \braket{\psi_\text{out}|
\text{Tr}_R\overline \Omega(c)
|\psi_\text{out}} 
\end{equation}
in analogy to \Cref{eq:fid_sum_over_c,eq:configuration_fidelity}. As in the main text, we proceed by placing a lower bound on $\overline F(c)$. To do so, we write
\begin{equation}
\label{eq:final_system_state_unnormalized}
\ket{\overline{\Omega}(c)}=
\ket{\overline{\text{good}} (c)} + \ket{\overline{\text{bad}} (c)},
    % \ket{\overline{\Omega}(c)} = V\left[ \ket{0}^A\otimes\ket{0}^B\otimes\left(\ket{\overline{\text{good}} (c)}^R + \ket{\overline{\text{bad}} (c)}^R\right)  \right] 
\end{equation}
in analogy to \Cref{eq:final_system_state}, and where the definitions of $\ket{\overline{\text{good}} (c)}$ and $\ket{\overline{\text{bad}} (c)}$ will be given shortly. 

Let us recall the definition of \emph{good} and \emph{bad} branches given in the main text. There, a branch $i$ was defined to be good if and only if $\mathbf{i}\cap \mathbf{c} = \varnothing$, where $\mathbf{i}$ is the set of routers in the $i$-th branch of the tree, and $\mathbf{c}$ is the set of routers which have $K_{m>0}$ applied to them according to error configuration $c$. We retain this definition here. Now, in main text we considered the case of mixed-unitary error channels, for which $K_0\propto I$, and for these channels we have that
\begin{equation}
    K_c\ket{R_i} = K_{c\not\in i} \ket{R_i},\, \text{for $K_0 \propto I$}
    \label{eq:K0_iden}
\end{equation}
for all $i\in g(c)$, where $g(c)$ denotes the set of good branches. In words, \Cref{eq:K0_iden} says that effect of the of no-error operators on the active routers is trivial. 
However, in the present case of general channels, for which $K_0\neq I$ in general, the no-error backaction associated with $K_0$ can alter the states of active routers. In particular,
\begin{equation}
    K_c\ket{R_i} = K_{c\not\in i}\left( a_{R_i} \ket{R_i} + b_{R_i}\ket{R_i^\perp}\right),\, \text{for general $K_0$}
    \label{eq:K0_not_iden}
\end{equation}
for all $i\in g(c)$, and where $\ket{R_i^\perp}$ denotes an orthogonal state, $\braket{R_i|R_i^\perp} = 0$. It follows from \Cref{eq:ideal_components,eq:ideal_components_a,eq:ideal_components_b} that the coefficients $a_{R_i}$ and $b_{R_i}$ satisfy
\begin{align}
    \label{eq:R_ideal_components_a}
    \Re\left(a_{R_i}\right) &\geq (1-\varepsilon)^{\log N/2} \\
    \label{eq:R_ideal_components_b}
    \left|b_{R_i}\right| &\leq \varepsilon^{\log N/2}.
\end{align}
In words, \Cref{eq:K0_not_iden} says that effect of the of no-error operators on the active routers is nontrivial and places the active routers in a superposition of the ideal state and some orthogonal state. \Cref{eq:R_ideal_components_a,eq:R_ideal_components_b} bound the coefficients in this superposition; in the relevant regime of $\varepsilon \log N \ll 1$, this nontrivial, no-error backaction is small, and the state $K_c\ket{R_i}$ is close to $K_{c\not\in i}\ket{R_i}$.

Now, let us define $\ket{\overline{\text{good}} (c)}$ and $\ket{\overline{\text{bad}} (c)}$ in \Cref{eq:final_system_state_unnormalized}. 
Given \Cref{eq:K0_not_iden}, it will be convenient to separate the ideal and non-ideal components of $K_c\ket{R_i}$, retaining only the ideal components in $\ket{\overline{\text{good}} (c)}$. We thus define
\begin{equation}
\ket{\overline{\text{good}}(c)} \equiv
    V\left[\ket{0}^A\ket{0}^B \sum_{i\in g(c)} \alpha_i a_{R_i} K_{c\not\in i}\ket{R_i}^R\right]
    % \ket{\overline{\text{good}}(c)} \equiv \sum_{i\in g(c)} \alpha_i\left( a_{R_i} K_{c\not\in i}\ket{R_i}\right)
\end{equation}
in analogy to \Cref{eq:final_system_state_good}, and
\begin{equation}
    \ket{\ubad} \equiv \ket{\overline \Omega(c)}-\ket{\ugood} =  V\left[
    \ket{0}^A  \ket{0}^B  \left(
    \sum_{i\in g(c)} \alpha_i b_{R_i} K_{c\not\in i}\ket{R_i^\perp}^R + \sum_{j\not\in g(c)} \alpha_j K_c \ket{R_i}^R\right)\right].
    % \ket{\overline{\text{bad}}(c)} \equiv \sum_{i\in g(c)} \alpha_i\left( b_{R_i} K_{c\not\in i}\ket{R_i^\perp}\right) + \sum_{j\not\in g(c)} \alpha_j \left(K_c \ket{R_i}\right)
\end{equation}
Note that $\ket{\overline{\text{bad}}(c)}$ not only contains a contribution from the bad branches, $j\not \in g(c)$, but also from the no-error backaction in the good branches, $i \in g(c)$.

We now seek to prove analogous statements to \Cref{eq:good_overlap,eq:bad_overlap} (given by \Cref{lem:2,lem:3} below). To do so, we first show that the overlap of $\ket{\overline{\mathrm{good}}(c)}$ and the ideal state is large in the following sense.
\begin{lemma}
\label{lem:good}
The overlap between $\ket{\overline{\text{good}} (c)}$ and the ideal final state, $\ket{\psi_\mathrm{out},f(c)}$, satisfies the bound
\begin{equation}
    % \Re\left[\braket{\psi_\text{out}^{AB},f(c)^{R}|V|0^A,0^B,\overline{\text{good}}(c)^R}\right] \geq \frac{(1-\varepsilon)^{k\log N}}{(1-\varepsilon_W)^{\log N}} \sqrt{q(c)}\, \Lambda(c)
   \Re\left[\braket{\psi_\mathrm{out},f(c)|\overline{\mathrm{good}}(c)}\right] \geq \sqrt{q(c)}\, \Lambda(c) \left(\frac{1-\varepsilon}{1-\varepsilon_W}\right)^{\frac{3}{2}\log N}
\end{equation}
where $\Lambda(c)= \sum_{i\in g(c)}|\alpha_i|^2$ is the weighted fraction of good branches, $\ket{f(c)}$ is some fixed final state of the routers (defined below), and 
\begin{equation}
    \label{eq:qc}
    q(c) \equiv (1-\varepsilon_W)^{(N-1)-|c|}\varepsilon_W^{|c|},
\end{equation}
with $\varepsilon_W \equiv 1 - \braket{W|K_0^\dagger K_0|W}$, and $|c|$ denotes the number of Kraus operators $K_{m>0}$ in the composite Kraus operator $K_c$, i.e.~$|c|$ is the number of errors.
\end{lemma}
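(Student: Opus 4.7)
The strategy is to compute the left-hand side as an explicit sum over good branches $i\in g(c)$ and to bound each term from below. I would begin by observing that the constrained-propagation argument from \Cref{sec:3} generalises verbatim to arbitrary error channels: for any $i\in g(c)$, every router in branch~$i$ receives only the no-error Kraus operator $K_0$, and $K_{c\not\in i}$ acts as the identity on those routers by definition. Since the routers outside branch~$i$ see incident and output modes that remain in $\ket{W}$ throughout the remainder of the query (no physical qubit is being routed through them), their routing operations act trivially on those modes. Consequently,
\begin{equation}
V\bigl[\ket{0}^A\ket{0}^B K_{c\not\in i}\ket{R_i}^R\bigr] = \ket{i}^A\ket{x_i}^B\ket{f_i(c)}^R,
\end{equation}
with $\ket{f_i(c)}$ the explicit product state carrying $\ket{W}$ on the $\log N$ branch-$i$ routers, $K_0\ket{W}$ on every other good router, and $K_{c(r)}\ket{W}$ on every bad router.

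I would then define the reference state as the normalised product $\ket{f(c)} \equiv K_c\ket{\mathcal{W}}/\sqrt{q(c)}$, i.e.\ $K_0\ket{W}/\sqrt{1-\varepsilon_W}$ on each good router and $K_{c(r)}\ket{W}/\sqrt{\varepsilon_W}$ on each bad router (the norm check reproduces $\|K_c\ket{\mathcal{W}}\|^2=q(c)$). Using $\braket{j|i}^A=\delta_{ij}$ to collapse $\ket{\psi_\mathrm{out}}=\sum_j\alpha_j\ket{j}\ket{x_j}$ against each good-branch contribution, the overlap reduces to
\begin{equation}
\braket{\psi_\mathrm{out},f(c)|\overline{\mathrm{good}}(c)} = \sum_{i\in g(c)} |\alpha_i|^2\, a_{R_i}\,\braket{f(c)|f_i(c)},
\end{equation}
and the remaining router overlap factorises: the branch-$i$ routers each contribute $a_W^*/\sqrt{1-\varepsilon_W}$, outside good routers contribute $\sqrt{1-\varepsilon_W}$, and bad routers contribute $\sqrt{\varepsilon_W}$, giving $\braket{f(c)|f_i(c)} = (a_W^*)^{\log N}(1-\varepsilon_W)^{-\log N}\sqrt{q(c)}$.

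The crux is then to bound $\Re[(a_W^*)^{\log N} a_{R_i}]$ from below. I would split this into a magnitude piece and a phase piece: applying \eqref{eq:a_psi} with $m=1$ gives $|a_\psi|\geq(1-\varepsilon)^{1/2}$ for any $\ket{\psi}$, so $|a_W|^{\log N}|a_{R_i}|\geq(1-\varepsilon)^{\log N}$; applied separately to the length-$\log N$ subproducts $(a_W)^{\log N}$ and $a_{R_i}$, \eqref{eq:a_psi} constrains $\cos(\log N\,\theta_W)$ and $\cos(\theta_{R_i})$, which via a triangle inequality on arguments followed by a cosine-of-difference estimate bounds the phase of the full product. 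Combined with $\varepsilon_W\leq\varepsilon$---itself a consequence of $\Re(a_W)\geq(1-\varepsilon)^{1/2}$ together with $|a_W|^2+|b_W|^2=1-\varepsilon_W$---this produces an additional factor of $((1-\varepsilon)/(1-\varepsilon_W))^{\log N/2}$ that consolidates with the $(1-\varepsilon_W)^{-\log N}$ prefactor into the target ratio $((1-\varepsilon)/(1-\varepsilon_W))^{3\log N/2}$. Summing $|\alpha_i|^2$ over $i\in g(c)$ finally produces the $\Lambda(c)$ factor.

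The main obstacle will be the phase-control step. The definition of $\varepsilon$ only guarantees the real-part lower bound \eqref{eq:a_psi} for products of at most $\log N$ amplitudes, so one cannot invoke it directly on the $2\log N$-factor product appearing here. One must instead apply the bound separately to the two length-$\log N$ subproducts, then combine the resulting phase constraints through elementary trigonometric inequalities, and carefully track how the magnitude bound $|a_W|^2\leq1-\varepsilon_W$ and the relation $\varepsilon_W\leq\varepsilon$ interact with those phase estimates. Getting the exponent to collapse cleanly to $3\log N/2$, rather than to some larger number that still gives favorable scaling but less tight constants, is the main bookkeeping challenge.
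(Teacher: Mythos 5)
Your overall skeleton matches the paper's: expand $\braket{\psi_\mathrm{out},f(c)|\ugood}$ as $\sum_{i\in g(c)}|\alpha_i|^2 a_{R_i}\braket{f(c)|\overline f_i(c)}$ using the exact success of good-branch queries, reduce everything to overlaps of router states, and use $\varepsilon_W\le\varepsilon$ (which you derive correctly) to produce the ratio $\bigl((1-\varepsilon)/(1-\varepsilon_W)\bigr)^{\frac{3}{2}\log N}$. Your choice $\ket{f(c)}=K_c\ket{\mathcal W}/\sqrt{q(c)}$ differs from the paper's choice (the normalized final router state $\ket{f_i(c)}$ of a fixed good branch), but since the lemma only requires \emph{some} fixed reference state, and since $F(c)\ge|\braket{\psi_\mathrm{out},f(c)|\overline\Omega(c)}|^2$ for any normalized $\ket{f(c)}$, this substitution is legitimate and your norm check $\|K_c\ket{\mathcal W}\|^2=q(c)$ is right. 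However, two points need attention. First, your claim that $\ket{\overline f_i(c)}$ is an \emph{explicit product state} ($\ket{W}$ on branch-$i$ routers, $K_{c(r)}\ket{W}$ elsewhere) happens to hold under the restricted error model of this appendix section (a single application of the channel at $t^*$, after the addresses are in place), because every routing operation at a router outside branch $i$ then only permutes modes that all sit in $\ket{W}$. But the paper deliberately avoids this assumption: its argument via $K_{c\not\in i,j}$ establishes only that the two final router states \emph{coincide} (up to the $K_0$ backaction on the branch routers), without ever asserting a product form, precisely because the lemma must later be re-used when errors strike at every time step --- where errors genuinely propagate through the routing unitaries and entangle routers with modes, and your product form fails. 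If you rely on the product form, your proof of the lemma does not survive the generalization in the final subsection of the appendix.

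Second, and more importantly, you have not actually proven the inequality: you explicitly defer the lower bound on $\Re\bigl[a_{R_i}(a_W^*)^{\log N}\bigr]$ to an unexecuted ``phase-control step.'' This is the entire quantitative content of the lemma. The paper handles it by applying the defining property of $\varepsilon$, \Cref{eq:a_psi}, separately to each sub-product of at most $\log N$ amplitudes ($a_{R_j}$, and the factors $a'_{R_i}$, $a'_{R_j}$ arising from peeling $K_{c\not\in i}$ down to $K_{c\not\in i,j}$), each giving a real-part bound of $(1-\varepsilon)^{\log N/2}$, and then multiplying these bounds. Your instinct that one cannot in general multiply real-part lower bounds of complex numbers is sound --- the paper is itself somewhat cavalier on exactly this point --- but flagging the obstacle is not the same as resolving it. Either carry out the trigonometric argument you sketch (bounding $|\arg a_\psi|$ from $\Re(a_\psi)\ge(1-\varepsilon)^{1/2}$ and $|a_\psi|\le 1$, then summing arguments over the $2\log N$ factors and using $\cos$ subadditivity), or strengthen the definition of $\varepsilon$ so that \Cref{eq:a_psi} covers products of conjugated and unconjugated amplitudes of total length $2\log N$; as written, the proposal stops short of the lemma's conclusion.
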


\begin{proof}
Consider a good branch $i\in g(c)$. We have that
\begin{equation}
    \label{eq:v0A0B}
     V\left[\ket{0}^A\ket{0}^B K_{c\not\in i}\ket{R_i}^R\right] = \ket{i}^A  \ket{x_i}^B \ket{\overline{f}_i(c)}^R, 
\end{equation}
because all active routers in $K_{c\not\in i}\ket{R_i}^R$ are in their ideal states, so the query succeeds. Here,
\begin{equation}
    \ket{\overline{f}_i(c)} = \bra{i}^A\bra{x_i}^B V\left[\ket{0}^A\ket{0}^B K_{c\not\in i}\ket{R_i}^R\right]
\end{equation}
is the unnormalized final state of the routers associated with branch $i$. Consider a second good branch, $j\in g(c)$, with $j\neq i$. In general, the final state of the routers with respect to branch $j$ may be different from the final state with respect to branch $i$, i.e.~$\ket{\overline{f}_i(c)}\neq \ket{\overline{f}_j(c)}$. 
As a result, the routers may be entangled with the address and bus in $\ket{\ugood}$. 

However, we now show that the overlap $\braket{\overline{f}_i(c)|\overline{f}_j(c)}$ is large, such that the routers are \emph{nearly disentangled} from the address and bus in $\ket{\ugood}$. Recall the definition of $K_{c\not\in i,j}$ [\Cref{eq:Kc_not_in_ij}] as the operator obtained by taking $K_c$ and replacing all Kraus operators acting on routers in branches $i$ and $j$ with the identity. Observe that
\begin{equation}
    K_{c\not\in i} \ket{R_i} = K_{c\not\in i,j}\left(a'_{R_i} \ket{R_i} + b'_{R_i}\ket{R_i^\perp} \right)
\end{equation}
where it follows from \Cref{eq:ideal_components,eq:ideal_components_a,eq:ideal_components_b} that
\begin{align}
    \label{eq:R_ideal_components_a_prime}
    \Re\left(a'_{R_i}\right) &\geq (1-\varepsilon)^{\log N/2} \\
    \label{eq:R_ideal_components_b_prime}
    \left|b'_{R_i}\right| &\leq \varepsilon^{\log N/2}.
\end{align}
Now, the overlap is given by
\begin{align}
    \braket{\overline{f}_i(c)|\overline{f}_i(j)} &=
    \left[ 
    \left(\bra{0}^A\bra{0}^B\bra{R_i}^R K_{c\not\in i}^\dagger\right)V^\dagger \ket{i}^A\ket{x_i}^B
    \right]
    \left[ 
   \bra{j}^A\bra{x_j}^B V  \left(\ket{0}^A\ket{0}^B K_{c\not\in j}\ket{R_j} \right)
    \right] \\
    \label{eq:f_overlap1}
    & = (a'_{R_i})^{*}a'_{R_j}
    \left[ 
    \left(\bra{0}^A\bra{0}^B\bra{R_i}^R K_{c\not\in i,j}^\dagger\right)V^\dagger \ket{i}^A\ket{x_i}^B
    \right]
    \left[ 
   \bra{j}^A\bra{x_j}^B V  \left(\ket{0}^A\ket{0}^B K_{c\not\in i,j}\ket{R_j} \right)
    \right], \\
    \label{eq:f_overlap2}
    & = (a'_{R_i})^{*}a'_{R_j} \left|K_{c\not\in i,j} \ket{R_i} \right|^2.
\end{align}
To obtain \Cref{eq:f_overlap1} we have used the fact that 
\begin{equation}
    \bra{i}^A\bra{x_i}^B V\left(\ket{0}^A\ket{0}^B K_{c\not\in i,j}\ket{R_i^\perp}^R\right) = 0
\end{equation}
because by definition $K_{c\not\in i,j}\ket{R_i^\perp}$ has at least one active router in an orthogonal state relative to $\ket{R_i}$, so that after the application of $V$ the final address state is necessarily perpendicular to $\ket{i}$. To obtain \Cref{eq:f_overlap2} we have used the fact that the two states in brackets in \Cref{eq:f_overlap1} are in fact the same. This is because both $K_{c\not\in i,j}\ket{R_i}$ and $K_{c\not\in i,j}\ket{R_j}$ have routers in branches $i$ and $j$ in their ideal states. As articulated in the main text and illustrated in \Cref{fig3}, it follows that the propagation of errors is constrained in such a way that the final states of the routers (the states in brackets) are the same. 

Continuing with our calculation of $\braket{\overline{f}_i(c)|\overline{f}_j(c)}$, note that the norm $\left|K_{c\not\in i,j} \ket{R_i} \right|^2$ can be computed straightforwardly using the fact both that $K_{c\not\in i,j}$ and $\ket{R_i}$ are tensor products over the different routers. We obtain,
\begin{equation}
    \left|K_{c\not\in i,j} \ket{R_i} \right|^2 = \varepsilon_W^{|c|}(1-\varepsilon_W)^{(N-1)-2\log N-|c|},
    \label{eq:KcijRi}
\end{equation}
where $|c|$ is the number of Kraus operators $K_{m>0}$ in the composite Kraus operator $K_c$, i.e.~$|c|$ is the number of errors. We have used the definition $\braket{W|K_1^\dagger K_1|W} = \varepsilon_W$, which implies $\braket{W|K_0^\dagger K_0|W} = 1-\varepsilon_W$. Thus, combining \Cref{eq:R_ideal_components_a_prime,eq:f_overlap2,eq:KcijRi}  the real part of the overlap can be bounded as
\begin{equation}
    \Re\left[\braket{\overline{f}_i(c)|\overline{f}_j(c)}\right] \geq (1-\varepsilon)^{\log N} \varepsilon_W^{|c|}(1-\varepsilon_W)^{(N-1)-2\log N-|c|} = \frac{(1-\varepsilon)^{\log N}}{(1-\varepsilon_W)^{2\log N}} q(c),
\end{equation}
where $q(c)$ is defined in \Cref{eq:qc}.
More convenient for our purposes is the equivalent bound
\begin{equation}
    \label{eq:Refic}
    \Re\left[\braket{f_i(c)|\overline{f}_j(c)}\right] 
    \geq \frac{(1-\varepsilon)^{\log N}}{(1-\varepsilon_W)^{\frac{3}{2}\log N}} \sqrt{q(c)},
\end{equation}
where $\ket{f_i(c)}$ is the normalized version of $\ket{\overline{f}_i(c)}$, and we have used the fact that $\braket{\overline{f}_i(c)|\overline{f}_i(c)} = q(c)/(1-\varepsilon_W)^{\log N}$.

Now, for any $i\in g(c)$, consider the overlap, 
\begin{align}
    \braket{\psi_{\mathrm{out}},f_i(c)|\ugood} &= \left[ \sum_{j} \alpha_j^* \bra{j}^A\bra{x_j}^B \bra{f_i(c)}^R\right]V\left[ \sum_{k\in g(c)} \alpha_k a_{R_k}\ket{0}^A\ket{0}^B  K_{c\not\in k}\ket{R_k}^R\right] \\
    & = \left[ \sum_{j} \alpha_j^* \bra{j}^A\bra{x_j}^B \bra{f_i(c)}^R\right] \left[\sum_{k\in g(c)} \alpha_k a_{R_k} \ket{k}^A\ket{x_k}^B \ket{\overline{f}_k(c)}^R\right] \\ 
    & = \sum_{j\in g(c)} |\alpha_j|^2 a_{R_j} \braket{f_i(c)|\overline{f}_j(c)}
\end{align}
where the second line follows from \Cref{eq:v0A0B}. Using \Cref{eq:R_ideal_components_a,eq:Refic}, we can bound the real part of this overlap as
\begin{equation}
    \Re\left[\braket{\psi_\text{out},f_i(c)|\ugood}\right] \geq \sqrt{q(c)}\, \Lambda(c) \left(\frac{1-\varepsilon}{1-\varepsilon_W}\right)^{\frac{3}{2}\log N} ,
\end{equation}
completing the proof.
\end{proof}

The result of \Cref{lem:good} is not precisely analogous to \Cref{eq:good_overlap}. Rather, the analogous statement is \Cref{lem:2} (see below), which also incorporates detrimental effects of $\ket{\ubad}$. To incorporate these effects, it is convenient to write 
\begin{equation}
    \ket{\ubad} = \ket{\ubad^\parallel}+\ket{\ubad^\perp}
\end{equation}
where
\begin{align}
    \ket{\ubad^\perp} &\equiv \ket{\ubad} - \frac{\braket{\ugood|\ubad}}{\braket{\ugood|\ugood}} \ket{\ugood} \\
    \ket{\ubad^\parallel} &\equiv \ket{\ubad} - \ket{\ubad^\perp}
\end{align}
are the components of $\ket{\ubad}$ perpendicular and parallel to $\ket{\ugood}$, respectively. We proceed by first quantifying the detrimental effects of $\ket{\ubad^\parallel}$. 

\begin{lemma}
\label{lem:2}
The overlap of $\ket{\overline{\mathrm{good}}(c)} + \ket{\overline{\mathrm{bad}}(c)^{\parallel}}$ with the ideal state can be bounded as
\begin{equation}
\Re\left[\bra{\psi_\mathrm{out},f(c)}
    \left(
   \ket{\overline{\mathrm{good}}(c) }+ \ket{\overline{\mathrm{bad}}(c)^{\parallel}}
    \right)\right] \geq (1-B\varepsilon^2)  \sqrt{q(c)}\, \Lambda(c) \left(\frac{1-\varepsilon}{1-\varepsilon_W}\right)^{\frac{3}{2}\log N},
\end{equation}
with $B \approx 1$ in the relevant limit of $\varepsilon \log N \ll 1$. 
\end{lemma}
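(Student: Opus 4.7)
The plan is to exploit the definition of the parallel projection, which forces $\ket{\ubad^\parallel}$ to be a scalar multiple of $\ket{\ugood}$. Writing
\begin{equation}
\ket{\ubad^\parallel} = \lambda(c)\,\ket{\ugood}, \qquad \lambda(c) \equiv \frac{\braket{\ugood|\ubad}}{\braket{\ugood|\ugood}},
\end{equation}
the target overlap becomes $(1+\lambda(c))\,X$ with $X \equiv \braket{\psi_\text{out},f(c)|\ugood}$. Lemma~\ref{lem:good} already lower-bounds $\Re(X)$, and a Cauchy--Schwarz estimate gives $|X| \le \sqrt{\braket{\ugood|\ugood}}$, so it suffices to show $|\lambda(c)| = O(\varepsilon^{2})$ in the regime $\varepsilon \log N \ll 1$: the reverse triangle inequality then yields $\Re[(1+\lambda(c))X] \ge (1-B\varepsilon^{2})\Re(X)$ and the stated bound follows with $B$ of order unity.

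For the denominator I would insert $V^\dagger V = I$ and expand into a double sum over $i,i' \in g(c)$. Off-diagonal terms vanish by computational-basis orthogonality: for $i \neq i'$, the states $\ket{R_i}$ and $\ket{R_{i'}}$ differ at the divergence level in a single active router, and the intervening operator $K_{c\not\in i}^{\dagger} K_{c\not\in i'}$ acts as the identity on that router (which lies in both branches). The diagonal terms factorize across routers exactly as in \Cref{eq:KcijRi}, giving $\braket{\ugood|\ugood} \ge (1-\varepsilon)^{\log N}\,\Lambda(c)\,q(c)/(1-\varepsilon_W)^{\log N}$. For the numerator I would split $\ket{\ubad}$ into its no-error-backaction piece and its bad-branch piece. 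The backaction piece is controlled immediately by $|b_{R_i}| \le \varepsilon^{\log N/2}$ and is negligible. The bad-branch piece requires a router-by-router analysis of $\bra{R_i}K_{c\not\in i}^{\dagger}K_c\ket{R_j}$ for $j \notin g(c)$: common-ancestor routers contribute near-unit diagonal amplitudes of $K_0$; each router in branch $i$ below the divergence level contributes an off-diagonal element $\braket{R_i^r|K_0|W}$ of magnitude $\le \sqrt{\varepsilon}$; each router in branch $j$ below the divergence level contributes an off-diagonal element of $K_{c(r)}^{\dagger}K_{c(r)}$ of magnitude $\le \varepsilon$ by Kraus completeness and Cauchy--Schwarz; and the remaining routers contribute exactly the normalization factors that appear in the denominator. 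Summing the resulting per-pair contributions, which scale as $\varepsilon^{3(\log N - \ell^*)/2}$ at divergence level $\ell^*$, over the $2^{\log N-\ell^*-1}$ admissible $(i,j)$ pairs at each level produces a geometric series that converges whenever $\varepsilon\log N \ll 1$ and is dominated by its leading term of order $\varepsilon^2$, yielding $|\lambda(c)| = O(\varepsilon^2)$.

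The main obstacle I anticipate is the detailed bookkeeping in the bad-branch contribution to the numerator. One must verify uniformly across $(i,j)$ pairs that the off-diagonal single-router factors combine without leaving a residue larger than $O(\varepsilon^{2})$, even for pairs with shallow divergence or for error configurations with many simultaneous errors on routers outside $i\cup j$. A related subtlety is confirming that the normalization factors from routers outside $i \cup j$ match exactly those arising in the denominator, so that the $q(c)$ and $(1-\varepsilon_W)^{\log N}$ dependences cancel cleanly and only pure $\varepsilon$-scaling survives in $\lambda(c)$. Once $|\lambda(c)| \le B\varepsilon^{2}$ is established, combining it with the Cauchy--Schwarz bound on $|X|$ and Lemma~\ref{lem:good}'s bound on $\Re(X)$ completes the proof.
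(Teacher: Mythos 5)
Your overall strategy matches the paper's: reduce the lemma to the statement $|\braket{\ugood|\ubad}|/\braket{\ugood|\ugood}\leq B\varepsilon^2$, lower-bound the denominator exactly as in \Cref{eq:56}, and bound the numerator by a router-by-router estimate of $\braket{R_i|K_{c\not\in i}^\dagger K_c|R_j}$ organized by the divergence level of the pair $(i,j)$ and summed as a geometric series. However, there is a genuine gap in the numerator estimate: the series you describe is \emph{not} dominated by an $O(\varepsilon^2)$ term. For sibling branches ($j=i\pm1$, divergence at the leaf level $\ell_{ij}=\log N$), branches $i$ and $j$ contain the identical set of physical routers and differ only in the state of the final one, so the only small factor in the cross matrix element is the single off-diagonal amplitude $\braket{R_i^{(\log N)}|K_0|R_j^{(\log N)}}$, bounded merely by $\sqrt{\varepsilon}$; there are no compensating $\varepsilon_W$ or additional $\sqrt{\varepsilon}$ factors from routers below the divergence, because there are none. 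This single pair therefore contributes $O(\sqrt{\varepsilon})$ to $\lambda(c)$, which destroys the claimed $O(\varepsilon^2)$ bound and hence the lemma. (The same $O(\sqrt{\varepsilon})$ obstruction appears in your backaction piece, since $b_{R_{i+1}}\braket{R_i|R_{i+1}^\perp}=\braket{R_i|K_0^{\otimes\log N}|R_{i+1}}\sim\sqrt{\varepsilon}$; bounding that piece by $|b_{R_i}|\leq\varepsilon^{\log N/2}$ alone is also not $O(\varepsilon^2)$ for small $\log N$.) The paper does not bound the sibling terms at all---it eliminates them exactly, by regrouping each sibling pair in the definition of the good/bad split so that $\ket{\ugood}$ contains $K_{c\not\in i}(\alpha_i\ket{R_i}+\alpha_{i+1}\ket{R_{i+1}})$ and $\ket{\ubad}$ contains the orthogonal complement, making the $j=i\pm1$ cross terms vanish identically (Remark~1 of \Cref{appendix:proof}). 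With siblings excluded, the leading surviving term is $\ell_{ij}=\log N-1$, which carries the extra factors $\varepsilon_W\cdot\sqrt{\varepsilon}$ and is genuinely $O(\varepsilon^2)$. Your proposal contains no mechanism for removing the sibling terms, so as written the argument fails.

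Two secondary differences worth noting. The paper avoids treating the backaction and bad-branch pieces separately by recombining them through $b_{R_j}K_{c\not\in j}\ket{R_j^\perp}=K_c\ket{R_j}-a_{R_j}K_{c\not\in j}\ket{R_j}$, which collapses the numerator into the single sum $\sum_{i\in g(c)}\sum_{j\neq i}(\alpha_i a_{R_i})^*\alpha_j\braket{R_i|K_{c\not\in i}^\dagger K_c|R_j}$; this is what makes a uniform per-pair bound possible. And ``summing the per-pair contributions'' glosses over how the weights $\alpha_i^*\alpha_j$ are controlled (the $\alpha_j$ with $j\not\in g(c)$ are not bounded by $\Lambda(c)$); the paper handles this with a matrix-norm argument, bounding $\|M\|_1$ and $\|M\|_\infty$, using $\|M\|_2^2\leq\|M\|_1\|M\|_\infty$, and finishing with Cauchy--Schwarz. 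You would need both ingredients, in addition to the sibling regrouping, to make your outline rigorous.
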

\begin{proof}
We prove the result by showing that the overlap of $\ket{\ubad}$ and $\ket{\ugood}$ is small relative to magnitude of $\ket{\ugood}$. Specifically, we prove that
\begin{equation}
    \label{eq:lemma2_alt}
    \frac{\left|\braket{\overline{\mathrm{good}}(c)|\overline{\mathrm{bad}}(c)}\right|}{\braket{\overline{\mathrm{good}}(c)|\overline{\mathrm{good}}(c)}} \leq B \varepsilon^2.
\end{equation}
Then, because $\ket{\ugood}+\ket{\ubad^\parallel}$ is parallel to $\ket{\ugood}$, and because its magnitude is bounded from below by $ (1-B\varepsilon^2)||\ket{\ugood}||$, the lemma follows.

First, we consider the magnitude of $\ket{\ugood}$,
\begin{align}
    \braket{\ugood|\ugood} &= \sum_{i,j\in g(c)} (\alpha_i a_{R_i})^* \alpha_j a_{R_j} \braket{R_i|K_{c\not\in i}^\dagger K_{c\not\in j}|R_j}\\
    &= \sum_{i\in g(c)} |\alpha_i a_{R_i}|^2 \braket{R_i|K_{c\not\in i}^\dagger K_{c\not\in i}|R_i}.
\end{align}
Using \Cref{eq:R_ideal_components_a} together with the fact that $\braket{R_i|K_{c\not\in i}^\dagger K_{c\not\in i}|R_i} = q(c)/(1-\varepsilon_W)^{\log N}$,
we can bound this overlap as
\begin{equation}
    \label{eq:56}
     \braket{\ugood|\ugood} \geq q(c) \Lambda(c) \left(\frac{1-\varepsilon}{1-\varepsilon_W}\right)^{\log N}.
\end{equation}
Next, we consider the overlap of $\ket{\ugood}$ and $\ket{\ubad}$, given by
\begin{align}
    \label{eq:ijgc0}
    &\braket{\ugood|\ubad} = 
    \left[ \sum_{i\in g(c)} (\alpha_i a_{R_i})^* \bra{R_i} K_{c\not\in i}^\dagger \right]
    \left[\sum_{j\in g(c)} \alpha_j b_{R_j} K_{c\not\in i}\ket{R_j^\perp} + \sum_{k\not\in g(c)} \alpha_k K_c \ket{R_k} \right] \\
    \label{eq:ijgc}
    & = \sum_{i\in g(c)}\sum_{j\in g(c), j\neq i} (\alpha_i a_{R_i})^* \alpha_j b_{R_j} \braket{R_i| K_{c\not\in i}^\dagger K_{c\not\in j}|R_j^\perp} 
    + \sum_{i\in g(c)}\sum_{k\not\in g(c)} (\alpha_i a_{R_i})^* \alpha_k  \braket{R_i| K_{c\not\in i}^\dagger K_{c}|R_k},
\end{align}
where we enforce $j\neq i$ in the second line by noting that $\braket{R_i|K_{c\not\in i}^\dagger K_{c\not\in i}|R_i^\perp} = 0$. 
\begin{remark}
\label{remark:1}
A subtle, but important point is that we can actually further enforce
\begin{align}
    j &\neq i + 1,\text{ for $i$ odd,}\\
    j &\neq i - 1,\text{ for $i$ even,}
\end{align}
in summation indexed by $j$ in \Cref{eq:ijgc}.
This is because, for $i$ even, branches $i$ and $i+1$ share the same active routers, and can be grouped together when using \Cref{eq:K0_not_iden} to define of $\ket{\ugood}$ and $\ket{\ubad}$. With these slightly modified definitions, $\ket{\ugood}$ contains the state $K_{c\not\in i}(\alpha_i\ket{R_i}+\alpha_{i+1}\ket{R_{i+1}})$, and $\ket{\ubad}$ contains the \emph{orthogonal} state $K_{c\not\in i}(\alpha_i\ket{R_i}+\alpha_{i+1}\ket{R_{i+1}})^{\perp}$. 
It follows that the $j = i+1$ terms vanish in \Cref{eq:ijgc} for even $i$. The $j = i-1$ terms vanish for odd $i$ by the same argument. To keep the notation simple, however, we will continue to simply write $j\neq i$ and reference this Remark whenever this  distinction is relevant.
\end{remark}
The first term in \Cref{eq:ijgc} can be simplified using the relation
\begin{equation}
   b_{R_j} K_{c\not\in j}\ket{R_j^\perp} = K_c\ket{R_j} - a_{R_j}K_{c\not\in j}\ket{R_j},
\end{equation}
which is equivalent to \Cref{eq:K0_not_iden}. Inserting this expression into \Cref{eq:ijgc} yields
\begin{align}
    \label{eq:good_bad_parallel_overlap}
    \braket{\ugood|\ubad^\parallel} = \sum_{i\in g(c)} \sum_{j\neq i} (\alpha_i a_{R_i})^* \alpha_j \braket{R_i| K_{c\not\in i}^\dagger K_c|R_j},
\end{align}
which we have simplified using the fact that $\braket{R_i|K_{c\not\in i}^\dagger K_{c\not\in j}|R_j} = \delta_{ij}$. \Cref{eq:good_bad_parallel_overlap} may be equivalently expressed as an inner product between two $N$-dimensional complex vectors, $\boldsymbol{\alpha_g}$ and $M\boldsymbol{\alpha}$,
\begin{equation}
    \braket{\ugood|\ubad^\parallel} = \braket{\boldsymbol{\alpha_g},M\boldsymbol{\alpha}}
\end{equation}
where the $i$-th entry of the vector $\boldsymbol{\alpha}$ is $\alpha_i$, the $i$-th entry of $\boldsymbol{\alpha_g}$ is
\begin{equation}
\alpha_{g,i} = \begin{cases} 
    a_{R_i} \alpha_i &\mbox{for } i\in g(c) \\
    0 &\mbox{for } i\not\in g(c),
    \end{cases}   
\end{equation}
and $M$ is an $N\times N$ complex matrix with entries $M_{ij} = \braket{R_i|K_{c\not\in i}^\dagger K_c|R_j}(1-\delta_{ij})$.

Consider the quantity $\braket{R_i| K_{c\not\in i}^\dagger K_c|R_j}$, with $i\in g(c)$ and $j\neq i$. Since all states and operators involved are tensor products, this term is a product of $N-1$ matrix elements, one for each of the $N-1$ routers. We now enumerate these matrix elements, first assuming $j\in g(c)$. Of the $N-1$ matrix elements, $(N-1) - 2\log N -|c|$ elements are $\braket{W|K_0^\dagger K_0|W} = 1-\varepsilon_W$, corresponding to error-free routers outside both branches $i$ and $j$. (There may be additional such elements, depending on how many routers $i$ and $j$ share, see \Cref{eq:61} below). With $j\in g(c)$, an additional $|c|$ matrix elements are $\braket{W|K_1^\dagger K_1|W} = \varepsilon_W$, corresponding to routers that suffer errors. The remaining $2\log N$ matrix elements can be enumerated as follows. For a given state $\ket{R_i}$, let $\ket{R^{(\ell)}_i}$ denote the corresponding state of the router at level $\ell$ (1-indexed) in branch $i$ of the tree. Additionally, let $\ell_{ij}$ denote the smallest value $\ell$ for which $\braket{R^{(\ell)}_i|R^{(\ell)}_j} = 0$.
At each level $\ell$ of the tree, there are two remaining matrix elements---associated with routers at level $\ell$---that we have not yet enumerated. These matrix elements are
\begin{align}
    \label{eq:61}
    \braket{R_i^{(\ell)}| K_0|R_j^{(\ell)}}\braket{W|K_0^\dagger K_0|W}, &\text{ for $\ell \leq \ell_{ij}$,} \\ 
    \label{eq:62}
    \braket{R_i^{(\ell)}|K_0|W}\braket{W|K_0^\dagger K_0|R_j^{(\ell)}}, &\text{ for $\ell > \ell_{ij}$.}
\end{align}
The absolute values of these products can be bounded using \Cref{eq:b_psi,eq:a_psi},
\begin{align}
    \label{eq:63}
    \left|\braket{R_i^{(\ell)}| K_0|R_j^{(\ell)}}\braket{W|K_0^\dagger K_0|W}\right| &\leq
    \begin{cases} 
    (1-\varepsilon_W) &\mbox{for } \ell \neq \ell_{ij} \\
    (1-\varepsilon_W) \, \sqrt{\varepsilon} &\mbox{for } \ell = \ell_{ij}
    \end{cases}
    \\ 
    \label{eq:64}
    \left|\braket{R_i^{(\ell)}| K_0|W}\braket{W|K_0^\dagger K_0|R_j^{(\ell)}}\right| &\leq \varepsilon_W \sqrt{\varepsilon},
\end{align}
where the bound on the first line is tighter for $\ell =\ell_{ij}$ because $\braket{R^{(\ell_{ij})}_i|R^{(\ell_{ij})}_j} = 0$ by definition. We can thus bound the product
\begin{equation}
    \label{eq:RKKR_bound}
    |\braket{R_i| K_{c\not\in i}^\dagger K_c|R_j}|
    %|M_{ij}|
    \leq (1-\varepsilon_W)^{[(N-1) - 2\log N - |c| + \ell_{ij}]} \, \varepsilon_W^{(|c| + \log N -\ell_{ij})} \, \varepsilon^{(1+\log N - \ell_{ij})/2}, \text{ for $i\in g(c)$ and $j\neq i$}. 
\end{equation}
We have only shown that the bound~(\ref{eq:RKKR_bound}) holds for $j\in g(c)$. In fact, though, it also holds for $j\not\in g(c)$. The calculation of the bound in this case proceeds almost identically, with the only difference being that we also utilize the bound
\begin{equation}
     \left|\braket{R_i^{(\ell)}| K_0|W}\braket{W|K_1^\dagger K_1|R_j^{(\ell)}}\right| \leq \varepsilon_W \sqrt{\varepsilon}.
\end{equation}

We now proceed to bound the norm of the vector $M \boldsymbol{\alpha}$. We do so by first bounding the maximum absolute value column and row sum norms, $||M||_1$ and $||M||_\infty$, respectively. To bound $||M||_\infty$, observe that 
\begin{align}
    ||M||_\infty &= \max_{i} \sum_{j} |M_{ij}| = \max_{i} \sum_{j\neq i}\left| \braket{R_i| K_{c\not\in i}^\dagger  K_c|R_j}\right| \\
    & \leq 
    \sum_{j\neq i}  (1-\varepsilon_W)^{[(N-1) - 2\log N - |c| + \ell_{ij}]} \, \varepsilon_W^{(|c| + \log N -\ell_{ij})} \, \varepsilon^{(1+\log N - \ell_{ij})/2},
\end{align}
where the second line follows from \Cref{eq:RKKR_bound}. 
To perform the sum, note that there at most $2^{(\log N - \ell )}$ branches $j$ for which $\ell_{ij} = \ell$. Grouping such branches together, we have
\begin{align}
     ||M||_\infty
     &\leq \sum_{\ell = 1}^{\log N -1} 2^{(\log N - \ell )}  (1-\varepsilon_W)^{[(N-1) - 2\log N - |c| + \ell]} \, \varepsilon_W^{(|c| + \log N -\ell)} \, \varepsilon^{(1+\log N - \ell)/2} \\
     & = q(c) \sum_{\ell = 1}^{\log N -1} 
     (1-\varepsilon_W)^{[- 2\log N + \ell]} \, (2\varepsilon_W)^{( \log N -\ell)} \, \varepsilon^{(1+\log N - \ell)/2}\\
     %&= q(c) \varepsilon_W \varepsilon \frac{ (1-\varepsilon_W)^{\log N }- (2\varepsilon_W \sqrt{\varepsilon})^{\log N}}{(1-\varepsilon_W)^{2\log N} (1-\varepsilon_W-2\varepsilon_W\sqrt{\varepsilon})} \\
     & \leq 2q(c)\varepsilon^2/(1-2\varepsilon^{\frac{3}{2}})
\end{align}
where the summation does not include an $\ell = \log N$ term as a consequence of \Cref{remark:1}. The inequality on the last line is obtained by applying the bounds $(1-\varepsilon_W)\leq 1$ and $\varepsilon_W\leq \varepsilon$, evaluating the sum, then further simplifying the result by assuming that $\varepsilon\leq 1/2$.
By a nearly identical calculation, we find that $||M||_1$ obeys the same upper bound as $||M||_\infty$. Because $||O||_2^2 \leq ||O||_1 ||O||_\infty$ for arbitrary $O$, it follows that $||M||_2$ obeys this upper bound as well. Therefore,
\begin{equation}
    \label{eq:2_norm}
    ||M\boldsymbol{\alpha}||_2 \leq ||M||_2 ||\boldsymbol \alpha||_2 \leq q(c) \varepsilon^2 /(1-2\varepsilon^{\frac{3}{2}}),
\end{equation}
as $||\boldsymbol \alpha||_2 = 1$. 

We can now finally bound $|\braket{\ugood|\ubad^\parallel}|$. By the Cauchy-Schwarz inequality, 
\begin{equation}
    |\braket{\ugood|\ubad^\parallel}|
    =|\braket{\boldsymbol{\alpha_g},M\boldsymbol{\alpha}}| \leq ||\boldsymbol{\alpha_g}||_2\, ||M \boldsymbol{\alpha}||_2.
\end{equation}
Using \Cref{eq:56,eq:2_norm} together with the fact that $||\boldsymbol{\alpha_g}||_2 = \Lambda(c)$, we thus obtain
\begin{equation}
    |\braket{\ugood|\ubad^\parallel}| \leq  \varepsilon^2 q(c)\Lambda(c)/(1-2\varepsilon^{\frac{3}{2}}) \leq B \varepsilon^2  \braket{\ugood|\ugood}.
\end{equation}
where we have defined
\begin{equation}
    B\equiv \frac{1}{1-2\varepsilon^{\frac{3}{2}}}\left(\frac{1-\varepsilon_W}{1-\varepsilon}\right)^{\log N}.
\end{equation}
This inequality is equivalent to \Cref{eq:lemma2_alt}, so the proof is complete.
\end{proof}

\Cref{lem:2}, which incorporates the detrimental effects of $\ket{\ubad^\parallel}$, is analogous to \Cref{eq:good_overlap}. It remains to quantify the detrimental effects of $\ket{\ubad^\perp}$, and we do so in the following lemma.

\begin{lemma}
\label{lem:3}
The overlap of $\ket{\overline{\mathrm{bad}}(c)^{\perp}}$ with the ideal state can be bounded as
\begin{equation}
|\braket{\psi_\mathrm{out},f(c)|\overline{\mathrm{bad}}(c)^{\perp}}|
  \leq  \sqrt{p(c)}- (1-B\varepsilon^2)\sqrt{q(c)}  \, \Lambda(c) \left(\frac{1-\varepsilon}{1-\varepsilon_W}\right)^{\frac{3}{2}\log N},
\end{equation}
where $p(c)$ is the probability of error configuration $c$, defined in \Cref{eq:pc}.
\end{lemma}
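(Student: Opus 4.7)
The plan is to bound $|\braket{\psi_\mathrm{out},f(c)|\overline{\mathrm{bad}}(c)^\perp}|$ by a geometric argument that combines the Pythagorean identity with the lower bound already supplied by \Cref{lem:2}. The key observation is that $\ket{\overline{\mathrm{good}}(c)} + \ket{\overline{\mathrm{bad}}(c)^\parallel}$ is orthogonal to $\ket{\overline{\mathrm{bad}}(c)^\perp}$ by construction, so defining $r = \|\,\ket{\overline{\mathrm{good}}(c)} + \ket{\overline{\mathrm{bad}}(c)^\parallel}\,\|$ and $s = \|\,\ket{\overline{\mathrm{bad}}(c)^\perp}\,\|$, I immediately have the constraint $r^2 + s^2 = p(c)$.

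Next I would decompose the unit vector $\ket{\psi_\mathrm{out},f(c)}$ in an orthonormal basis containing the unit vectors $\widehat u$ and $\widehat v$ along $\ket{\overline{\mathrm{good}}(c)} + \ket{\overline{\mathrm{bad}}(c)^\parallel}$ and $\ket{\overline{\mathrm{bad}}(c)^\perp}$, writing $\ket{\psi_\mathrm{out},f(c)} = \alpha\,\widehat u + \beta\,\widehat v + \ket{\phi_\perp}$ with $|\alpha|^2 + |\beta|^2 \leq 1$. Writing $L$ for the lower bound furnished by \Cref{lem:2}, I then have $r\,\Re[\alpha^*] \geq L$, which yields $|\alpha|^2 \geq L^2/r^2$ (and in particular $r \geq L$), hence $|\beta|^2 \leq 1 - L^2/r^2$. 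Since the quantity to be bounded equals $|\beta|\,s$, combining these gives
\begin{equation}
|\braket{\psi_\mathrm{out},f(c)|\overline{\mathrm{bad}}(c)^\perp}|^2 \leq (p(c) - r^2)\left(1 - \frac{L^2}{r^2}\right) = p(c) + L^2 - r^2 - \frac{p(c)\, L^2}{r^2}.
\end{equation}

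The remaining step is to show this upper bound is at most $(\sqrt{p(c)} - L)^2 = p(c) + L^2 - 2L\sqrt{p(c)}$. After rearrangement, the inequality reduces to $r^2 + p(c)L^2/r^2 \geq 2L\sqrt{p(c)}$, which is precisely AM--GM applied to the two terms on the left (whose product is $p(c)L^2$). The main obstacle I anticipate is bookkeeping around degenerate cases rather than any hard estimate: one must verify $r > 0$ before dividing by it, but this is automatic when $L > 0$ via the bound $r \geq L$ just established, and the $L = 0$ or $r = 0$ corners reduce directly to the Cauchy--Schwarz estimate $|\beta|\,s \leq s \leq \sqrt{p(c)}$. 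Once these edge cases are dispatched, the proof is essentially Pythagoras, \Cref{lem:2}, normalization of $\ket{\psi_\mathrm{out},f(c)}$, and a single application of AM--GM.
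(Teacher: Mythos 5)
Your proposal is correct and follows essentially the same route as the paper: both rest on the orthogonal decomposition $\ket{\overline\Omega(c)}=(\ket{\overline{\mathrm{good}}(c)}+\ket{\overline{\mathrm{bad}}(c)^{\parallel}})+\ket{\overline{\mathrm{bad}}(c)^{\perp}}$, the Pythagorean relation $r^2+s^2=p(c)$, normalization of $\ket{\psi_\mathrm{out},f(c)}$, and the lower bound from \Cref{lem:2}. The only difference is presentational: the paper invokes the inequality $|\braket{\psi|\overline\phi^\perp}|\leq(\braket{\overline\phi|\overline\phi}+\braket{\overline\phi^\perp|\overline\phi^\perp})^{1/2}-|\braket{\psi|\overline\phi}|$ as a ready-made identity (itself just two-dimensional Cauchy--Schwarz), whereas you re-derive the equivalent statement explicitly via the orthonormal expansion and AM--GM.
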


\begin{proof}
The proof is a direct application of the identity
\begin{equation}
      |\braket{\psi|\overline\phi^\perp}| \leq
    \left(
    \braket{\overline\phi|\overline\phi}+
    \braket{\overline\phi^\perp|\overline\phi^\perp}
    \right)^{\frac{1}{2}}
    - |\braket{\psi|\overline\phi}|,
\end{equation}
which holds for arbitrary $\ket{\psi}$ and unnormalized, orthogonal states $\ket{\overline\phi}$ and $\ket{\overline\phi^\perp}$.  Taking $\ket{\phi} = \ket{\ugood}+\ket{\ubad^\parallel}$ and $\ket{\phi^\perp} = \ket{\ubad^\perp}$, we have
\begin{align}
    |\braket{\psi_\mathrm{out},f(c)|\overline{\mathrm{bad}}(c)^{\perp}}|
    &\leq \braket{\overline \Omega(c)|\overline \Omega(c)}^{\frac{1}{2}} - \left|\bra{\psi_\mathrm{out},f(c)}\left(\ket{\ugood}+\ket{\overline{\mathrm{bad}}(c)^{\parallel}}\right)\right| 
\end{align}
where we have used $\ket{\overline \Omega(c)} = \ket{\ugood}+\ket{\ubad^\parallel}+\ket{\ubad^\perp}$. The proof is completed by leveraging the result of \Cref{lem:2} and recognizing the first term on the right hand side as $\sqrt{p(c)}$. 

\end{proof}

\Cref{lem:2,lem:3} are the analogies of \Cref{eq:good_overlap,eq:bad_overlap} and respectively. Following the main text, we can then write down the analogy of \Cref{eq:config_fid_bound} using the reverse triangle inequality
\begin{equation}
    \label{eq:83}
    \overline F(c) \geq 
    \begin{cases} 
    \left(2(1-C)\Lambda(c)\sqrt{q(c)} - \sqrt{p(c)} \right)^2, &\mbox{for }  2(1-C)\Lambda(c)\sqrt{q(c)} \geq \sqrt{p(c)}, \\
    0, &\mbox{otherwise. }
    \end{cases}
\end{equation}
where, to simplify notation, we have defined
\begin{equation}
    1-C\equiv (1-B\varepsilon^2)
    \left(\frac{1-\varepsilon}{1-\varepsilon_W}\right)^{\frac{3}{2}\log N}.
\end{equation}
Now, recall that the query fidelity is given by $F = \sum_c \overline{F}(c)$. We can equivalently express the query fidelity as an expectation value with respect to the distribution $q(c)$ as
\begin{equation}
    F = \mathbb{E}\left(\overline{F}/q\right) \equiv \sum_c q(c) \left(\frac{\overline{F}(c)}{q(c)} \right).
\end{equation}
The utility of expressing $F$ this way will be made apparent shortly.
Analogously to \Cref{eq:infid_bound_intermediate0,eq:infid_bound_intermediate} in the main text, we can then bound the query fidelity as
\begin{align}
    F &\geq \mathbb{E}\left(\sqrt{\overline{F}/q}\right)^2 = \left[\sum_c \sqrt{q(c) \overline{F}(c)} \right]^2
    \\
    &\geq \left[2 (1-C) \mathbb{E}(\Lambda) - \mathbb{E}(\sqrt{p/q})\right]^2
\end{align}
where the second line follows from \Cref{eq:83} under the assumption that $2 (1-C) \mathbb{E}(\Lambda) \geq \mathbb{E}(\sqrt{p/q})$. Let us now consider the two expectation values. We have
\begin{equation}
    \mathbb{E}(\sqrt{p/q}) = \sum_c \sqrt{q(c)p(c)} \leq 1,
\end{equation}
where the inequality follows from recognizing the sum as the Bhattacharyya distance between the distributions $p$ and $q$. 
We also have
\begin{equation}
    \mathbb{E}(\Lambda) = \sum_c q(c) \Lambda(c) = (1-\varepsilon_W)^{\log N},
\end{equation}
which follows from a recursive calculation nearly identical to the one described in the main text. Thus, we find
\begin{equation}
    F \geq \left[2 (1-C) (1-\varepsilon_W)^{\log N} - 1\right]^2
\end{equation}
in analogy to \Cref{eq:infid_bound_intermediate2}. Employing Bernoulli's inequality, we obtain
\begin{align}
    F &\geq \left[ 2(1-C)(1-\varepsilon_W \log N) - 1 \right]^2 \\
    & \geq 1 - 4\varepsilon_W \log N - 4C(1-\varepsilon_W \log N),
\end{align}
assuming $\varepsilon_W \log N + 4C(1-\varepsilon_W\log N)\leq 1/4$.
Equivalently,
\begin{align}
    1-F &\leq  4\varepsilon_W \log N + 4C(1-\varepsilon_W \log N).
\end{align}

Let us simplify this bound on the infidelity so that the scaling with $N$ is evident. We use Bernoulli's inequality again to obtain the bound
\begin{align}
    C &= 1 - (1-B\varepsilon^2)\left(\frac{1-\varepsilon}{1-\varepsilon_W}\right)^{\frac{3}{2}\log N} \\
    & \leq \frac{3}{2}\log N \frac{\varepsilon-\varepsilon_W}{1-\varepsilon_W}(1+B\varepsilon^2).
\end{align}
Inserting this bound, and using the definition of $B$, we find
\begin{equation}
    \label{eq:finally}
    1-F \leq A \varepsilon \log N
\end{equation}
where we have defined
\begin{align}
    \label{eq:finallyA}
    A &\equiv 
    4\frac{\varepsilon_W}{\varepsilon} + 6\frac{1-\varepsilon_W/\varepsilon}{1-\varepsilon_W}\left(1+\frac{\varepsilon^2}{1-2\varepsilon^{\frac{3}{2}}}\left(\frac{1-\varepsilon_W}{1-\varepsilon}\right)^{\log N}\right) (1-\varepsilon_W \log N)
\end{align}
In the relevant limit of $\varepsilon \log N \ll 1$, the coefficient $A$ is well-approximated by keeping only the leading order terms,
\begin{equation}
    A \approx 6 - 2\frac{\varepsilon_W}{\varepsilon}.
    %4\frac{\varepsilon_W}{\varepsilon} + 6(1 - \varepsilon_W/\varepsilon)
\end{equation}
Notice that, for mixed-unitary error channels, for which $\varepsilon_W = \varepsilon$, we have $A\approx 4$, in agreement with the proof in the main text.

This concludes the proof of the favorable error scaling, \Cref{eq:finally}. Recall, however, that we have made two simplifying assumptions. As described in \Cref{sec:proof_error_model}, we have assumed that errors are only applied at one time step and that the error channel has a Kraus rank of 2. It remains to relax these assumptions, and we do so in the following subsections. 

\subsection*{Error channels with Kraus rank \texorpdfstring{$\neq 2$}{TEXT}}
The proof can be straightforwardly adapted to handle the case of channels with Kraus rank $\neq 2$. First, let us consider the case of Kraus rank $=1$, for which the error channel contains a single unitary Kraus operator $K_0$. This case can be understood as describing coherent errors in the routers and is actually already covered by the above proof. One simply sets $K_1 = 0$, and the bound \Cref{eq:finally} holds.

Next we consider the Kraus rank $> 2$ case. We define two error configurations, $c$ and $c'$, to be \emph{similar} if and only if $K_{c(r,t)} = K_0 \iff K_{c'(r,t)} = K_0$. That is, error configurations are similar if errors occur at the same locations, though which error $K_{m>0}$ occurs at a given location can differ. We further define $c$'s similar set as the set of all configurations $c'$ that are similar to $c$.

Let $|c_m|$ denote the number of errors $K_{m>0}$ in configuration $c$. We generalize the definition of $q(c)$ as
\begin{equation}
    q(c) = (\braket{W|K_0^\dagger K_0|W})^{(N-1) - |c|} \prod_{i>0} \left(\braket{W|K_m^\dagger K_m|W}\right)^{|c_m|},
\end{equation}
where $|c| = \sum_m|c_m|$. Recall that $\varepsilon_W \equiv 1- \braket{W|K_0^\dagger K_0|W}$. Then,
because
\begin{equation}
    \sum_{i>0} \braket{W|K_i^\dagger K_i|W} = 1- \varepsilon_W,
\end{equation}
as a consequence of the Kraus operators' completeness relation,
the total probability of obtaining any configuration from $c$'s similarity set is
\begin{equation}
    (1-\varepsilon_W)^{(N-1)-|c|}\varepsilon_W^{|c|}.
\end{equation} 
This quantity matches the definition of $q(c)$ for the Kraus rank $=2$ case. 
Therefore, by grouping all error configurations in the same similarity set together, the proof proceeds exactly as above. In other words, the proof is agnostic to the kind of errors that occur; so long as the total probability of \emph{any} error occurring is fixed, the result is the same. 

\subsection*{Errors at all time steps}
The core idea of the proof still holds in the case when errors occur at all time steps. In particular, the arguments above can be iterated at each time step in order to isolate the component of the final state that lies along the ideal state. 
Accordingly, we define 
\begin{equation}
    \ket{\ugood} = \sum_{i\in g(c)} \alpha_i \tilde{a}_i \ket{i}^A \ket{x_i}^B \ket{\overline f_i(c)}
\end{equation}
which differs from the definition of $\ket{\ugood}$ given above only in the coefficients $\tilde a_i$. These coefficients are associated with the repeated application of \Cref{eq:K0_not_iden}, which we now apply $T$ times to each of the $\log N$ routers in branch $i$. We extend the definition \Cref{eq:a_psi} to
\begin{align}
    \Re\left[\prod_{i=1}^{m} a_{\psi_i}\right] &\geq (1-\varepsilon)^{m/2}, \text{ for any $m\leq T\log N$},
\end{align}
from which it follows that $\Re\tilde a_i \geq (1-\varepsilon)^{\frac{1}{2}T\log N}$. \Cref{lem:good} is then generalized simply by replacing $\log N \rightarrow T \log N$.
Conceptually, the idea is that each time the no-error operator $K_0$ is applied to an active router, we pay the price of a $\sim (1-\varepsilon)^{\frac{1}{2}}$ prefactor in order to isolate the ideal component resultant state. When we repeat this procedure $T$ times for each of the $\log N$ routers, these prefactors multiply, giving a total prefactor of $\sim(1-\varepsilon)^{\frac{1}{2}T\log N}$. 

Next consider the generalization of \Cref{lem:2}, which asserts that $|\braket{\ugood|\ubad}|/\braket{\ugood|\ugood}= O(\varepsilon^2)$. Observe the proof of \Cref{lem:2} does not actually depend on the time step when the errors are applied. For example, at time $t<t^*$, the QRAM has active routers in only the first $< \log N$ levels of the tree. One can equivalently view this state as the state of a QRAM with $< \log N$ levels at time step $t^*$, and so the proof of \Cref{lem:2} directly applies.
When errors occur at multiple time steps,
it follows that
\begin{equation}
    \label{eq:lem2again}
    |\braket{\ugood|\ubad}| \leq B'\varepsilon^2 \braket{\ugood|\ugood},
\end{equation}
for some $B'$.
In general $B'\neq B$, though $B'\approx 1$ in the limit $\varepsilon \log N \ll 1$.
An equivalent statement of this result is that, regardless of when errors occur, the overlap of the good and bad states always contains sufficiently many off-diagonal matrix elements (like $\braket{\psi|K_0^\dagger K_0|\psi^\perp}\sim\varepsilon$) such that $\braket{\ugood|\ubad}\sim \varepsilon^2$ to leading order.

Having generalized \Cref{lem:good,lem:2}, the generalization of \Cref{lem:3} follows straightforwardly. One simply replaces $\log N \rightarrow T \log N$ and $B\rightarrow B'$ in the bound.

With these generalizations in hand, one can use the same argument as above to show that the fidelity can be expressed in terms of $\mathbb{E}(\Lambda)$, where the expectation is now taken with respect to error configurations with errors at multiple time steps. $\mathbb{E}(\Lambda)$ was already computed with respect to such configurations in the main text. Applying the same calculation, we obtain
\begin{equation}
    \mathbb{E}(\Lambda) = (1-\varepsilon_W)^{T\log N}.
\end{equation}
Proceeding as above, we then find
\begin{equation}
    1-F \leq A' \varepsilon T \log N
\end{equation}
where, in the limit $\varepsilon \log N \ll 1$, the coefficient $A'$ can be well-approximated by
\begin{equation}
    A' \approx 6 - 2\frac{\varepsilon_W}{\varepsilon}.
\end{equation}
In general, $A'\neq A$ since $B\neq B'$.

\end{widetext}

\end{document}